\pgfplotsset{compat=1.18}
\newtheorem{lemma}{Lemma}
\newtheorem{remark}{Remark}
\newtheorem{theorem}{Theorem}
\newtheorem{assumption}{Assumption}
\newtheorem{example}{Example}
\newcommand{\R}{\mathbb{R}}
\newcommand{\Rpos}{\mathbb{R}_{\geq 0}}
\newcommand{\ru}{\bar{\rho}}
\newcommand{\rl}{\underline{\rho}}
\newcommand{\dru}{\dot{\bar{\rho}}}
\newcommand{\drl}{\dot{\underline{\rho}}}
\newcommand{\rank}{\mathrm{rank}}
\newcommand{\alphb}{\bar{\alpha} }
\newcommand{\alphbopt}{\bar{\alpha}_{\text{opt}} }
\newcommand{\alphaopt}{\alpha_{\text{opt}} }
\newcommand{\epialph}{\varepsilon_{\alpha}}
\newcommand{\epialphb}{\bar{\varepsilon}_{\alpha}}
\newcommand{\depialph}{\dot{\varepsilon}_{\alpha} }
\newcommand{\xialph}{\xi_{\alpha} }
\newcommand{\alphah}{\hat{\alpha} }
\newcommand{\rlalph}{\underline{\rho}_{\alpha}}
\newcommand{\rualph}{\bar{\rho}_{\alpha}}
\newcommand{\C}{\mathcal{C}}
\newcommand{\T}{\mathcal{T}}
\newcommand{\I}{\mathcal{I}}
\newcommand{\Hes}{\mathcal{H}}
\newcommand{\Ob}{\bar{\Omega}}
\newcommand{\Ox}{\Omega_x}
\newcommand{\Oxs}{\Omega_x^s}
\newcommand{\Oz}{\Omega_z}
\newcommand{\mathdot}{\mathord{\cdot}}
\newcommand{\Oalphh}{\Omega_{\alphah}}
\newcommand{\rhos}{\rho_s}
\newcommand{\rhod}{\rho_d}
\newcommand{\drhos}{\dot{\rho}_s}
\newcommand{\drhod}{\dot{\rho}_d}
\newcommand{\cl}{\mathrm{cl}}
\newcommand{\gradxalph}{\nabla_x \alpha}
\newcommand{\rondalph}{\frac{\partial \alpha}{\partial x}}
\newcommand{\col}{\mathrm{col}}
\newcommand{\taum}{\tau_{\mathrm{max}}}
\def\@IEEEtablestring{table}
\long\def\@makecaption#1#2{%
	\ifx\@captype\@IEEEtablestring%
	\begin{center}{\footnotesize #1}\\{\footnotesize\scshape #2}\end{center}%
	\@IEEEtablecaptionsepspace
	\else
	\@IEEEfigurecaptionsepspace
	\setbox\@tempboxa\hbox{\footnotesize #1.~~ #2}%
	\ifdim \wd\@tempboxa >\hsize%
	\setbox\@tempboxa\hbox{\footnotesize #1.~~ }%
	\parbox[t]{\hsize}{\footnotesize \noindent\unhbox\@tempboxa#2}%
	\else%
	\ifcenterfigcaptions \hbox to\hsize{\footnotesize\hfil\box\@tempboxa\hfil}%
	\else \hbox to\hsize{\footnotesize\box\@tempboxa\hfil}%
	\fi\fi\fi}
\title{\LARGE \bf
Control of Nonlinear Systems Under Multiple Time-Varying Output Constraints: A Single Funnel Approach \\ (extended version)
}
\author{Farhad Mehdifar, Lars Lindemann, Charalampos P. Bechlioulis, and Dimos V. Dimarogonas
\thanks{This work is supported by ERC CoG LEAFHOUND, the KAW foundation, and the Swedish Research Council (VR).}
\thanks{F. Mehdifar and D. V. Dimarogonas are with the Division of Decision and Control Systems, KTH Royal Institute of Technology, Stockholm, Sweden.   {\tt\small mehdifar@kth.se; dimos@kth.se}}%
\thanks{Lars Lindemann is with is with Thomas Lord Department of Computer Science, University of Southern California, Los Angeles, CA, USA.
	{\tt\small llindema@usc.edu}}
\thanks{C. P. Bechlioulis is with the Division of Systems and Control of the Department of Electrical and Computer Engineering at University of Patras, Patra, Greece. {\tt\small chmpechl@upatras.gr}}%
}
\begin{document}

\maketitle
\thispagestyle{empty}
\pagestyle{empty}

\begin{abstract}
	This paper proposes a novel control framework for handling (potentially coupled) multiple time-varying output constraints for uncertain nonlinear systems. First, it is shown that the satisfaction of multiple output constraints boils down to ensuring the positiveness of a scalar variable (the signed distance from the time-varying output-constrained set's boundary). Next, a single funnel constraint is designed properly, whose satisfaction ensures convergence to and invariance of the time-varying output-constrained set. Then a robust and low-complexity funnel-based feedback controller is designed employing the prescribed performance control method. Finally, a simulation example clarifies and verifies the proposed approach.
\end{abstract}
\section{Introduction}

Due to practical needs and theoretical challenges, controlling nonlinear systems under constraints has attracted much attention in the past decade. In particular, imposing time-varying output constraints in nonlinear control systems is motivated by ensuring tracking/stabilization performance or safety requirements \cite{tee2011control, ilchmann2002tracking, berger2021funnel, bechlioulis2014low, bechlioulis2008robust, theodorakopoulos2015low, jin2018adaptive,  xu2018constrained}. Existing feedback control approaches that can deal with time-varying output constraints are mainly categorized as control designs based on Time-Varying Barrier Lyapunov Function (TVBLF) \cite{tee2011control}, Funnel Control (FC) \cite{ilchmann2002tracking,berger2021funnel}, and Prescribed Performance Control (PPC) \cite{bechlioulis2008robust,bechlioulis2014low} methods, where FC and PPC offer simple and more constructive control designs with inherent robustness. 

The aforementioned control methods are often used for ensuring a user-defined transient and steady-state performance on the tracking/stabilization error by confining its evolution within a user-defined time-varying funnel as the only output constraint. For nonlinear systems with multiple time-varying output constraints (funnels), TVBLF, FC, and PPC methods are only applied when the system outputs are selected as independent states of the system with the number of inputs equal to the number of outputs \cite{bechlioulis2008robust, theodorakopoulos2015low, jin2018adaptive}. In particular, such a choice for the system's outputs ensures that the resulting time-varying output constraints (funnels) remain decoupled at all times. Since various applications (e.g., those dealing with safety \cite{glotfelter2017nonsmooth} and general spatiotemporal specifications \cite{lindemann2021funnel}) require considering arbitrary (potentially coupled) multiple time-varying output constraints, it is significant to develop control methods for uncertain nonlinear systems under more general classes of output constraints.

In this paper, inspired by the works \cite{lindemann2017prescribed,lindemann2021funnel}, we propose an alternative control design method for uncertain nonlinear systems with potentially coupled multiple time-varying output constraints, where we encapsulate the satisfaction of all time-varying output constraints by imposing the positiveness of a single scalar variable (the signed distance from the time-varying output constrained set's boundary). We show that by an appropriate design of a single funnel constraint for the mentioned scalar variable, one can enforce its positiveness in a finite time, which leads to the satisfaction of all time-varying output constraints. To this end, we employ the PPC method to design a robust low-complexity control law for uncertain nonlinear systems. Here by a low-complexity control design, we mean that no estimation scheme is used in the proposed control law.

We highlight that one can also employ time-varying Control Barrier Functions (CBFs) \cite{xu2018constrained,ames2019control} for controlling nonlinear systems under time-varying output constraints. Indeed, our work captures multiple time-varying output constraints similarly to \cite{glotfelter2017nonsmooth}, in which the composition of multiple time-invariant CBFs is considered. Nevertheless, typical control synthesis using the CBF notion requires exact knowledge of the system dynamics and solving online QP optimization problems. In contrast, in this work, we provide a computationally tractable (i.e., optimization-free) and robust (model-free) control law.

In contrast to TVBLF, FC, and PPC methods, in which system outputs are merely subjected to (often symmetric) funnel constraints, we consider generic asymmetric funnel constraints as well as one-sided constraints over the system outputs (see Section \ref{sec:problemformulation}). Furthermore, while the aforementioned control methods require the satisfaction of all output constraints at the initial time, our proposed control method addresses convergence to the time-varying output-constrained set within a user-defined finite time in case the output constraints are not initially satisfied. In this respect, we ensure convergence to and invariance of the time-varying output-constrained set within an appointed finite time. Overall, in this work, we generalize feedback control designs for nonlinear systems under an expanded class of time-varying output constraints and facilitate the controller synthesis and stability analysis. In this respect, reference tracking under prescribed transient and steady-state specifications and time-invariant output constraints for nonlinear systems fit into our results as special cases.

\section{Problem Formulation} 
\label{sec:problemformulation}

Consider the following first-order nonlinear input affine dynamical system:
\begin{equation} \label{eq:sys_dynamics_firstorder}
	\begin{cases}
		\dot{x} = f(x) + g(x) u + w(t), \\
		y = h(x),
	\end{cases}
\end{equation}
where $x = \col(x_i) \coloneqq [x_1, x_2, \ldots, x_n]^\top \in \R^{n}$ is the system's state, $u \in \R^n$ and  $y \coloneqq \col(y_i) = [y_1, y_2, \ldots, y_m]^\top \in \R^{m}$ denote the control input and the system's output vector, respectively. Moreover, $f:  \R^n \rightarrow \R^{n}$ and $g: \R^n \rightarrow \R^{n \times n}$ are locally Lipschitz continuous in $x$, and $h: \R^n \rightarrow \R^{m}$ is $\C^2$ (i.e., two times continuously differentiable). In particular, let $h(x) = [h_1(x), h_2(x), \ldots, h_m(x)]^{\top}$, so that $y_i =h_i(x), i \in \I \coloneqq \{1, \ldots, m\}$. Furthermore, $w: \R_{\geq 0} \rightarrow \R^n$ denotes bounded piecewise continuous external disturbances, where $\|w(t)\| \leq \bar{w}, \forall t \geq 0$ ($\|\mathdot\|$ denotes the Euclidean norm). In addition, let $x(t;x_0)$ indicate the solution of \eqref{eq:sys_dynamics_firstorder} under the initial condition $x_0 \coloneqq x(0)$ and control input $u$.

\begin{assumption} \label{assum:uncertain}
	The functions $f(x)$ and $g(x)$, and the constant $\bar{w}$ are unknown for the controller design.
\end{assumption}

\begin{assumption} \label{assum:symm_g}
	The symmetric component of the input gain matrix $g(x)$, i.e., $g_s(x) \coloneqq \frac{1}{2}(g^{\top}(x)+g(x))$, is uniformly sign definite in $x$. Without loss of generality, we assume $g_s(x)$ is uniformly positive definite in $x \in \R^n$, i.e., $z^{\top} g_s(x) z > 0, \forall z,x \in \R^n, z \neq 0$.
\end{assumption}

Note that Assumption \ref{assum:symm_g} constitutes a controllability condition on \eqref{eq:sys_dynamics_firstorder}.

Let the outputs of \eqref{eq:sys_dynamics_firstorder} be subjected to the following class of time-varying constraints:
\begin{equation}\label{eq:outputconst}
\!\!\! \rl_i(t) < h_i(x) < \ru_i(t), \;\; i \in  \I = \{1, \ldots, m\}, \;\; \forall t \geq 0, \!
\end{equation}
where $\rl_i, \ru_i: \Rpos \rightarrow \R \cup \{\pm \infty\}, i \in \I$. We assume for each $i \in \I$, that \textit{at least} one of $\ru_i(t)$ and $\rl_i(t)$ is a bounded $\C^1$ function of time with a bounded derivative. In other words, we allow $\rl_i(t) = - \infty$ (resp. $\ru_i(t) = + \infty$) when $\ru_i(t)$ (resp. $\rl_i(t)$) is bounded for all $t \geq 0$. In this respect, \eqref{eq:outputconst} can either represent \textit{Lower Bounded One-sided} (LBO) time-varying constraints in the form of $\rl_i(t) < h_i(x)$, \textit{Upper Bounded One-sided} (UBO) time-varying constraints in the form of $h_i(x) < \ru_i(t)$, as well as (time-varying) \textit{funnel constraints} in the form of $\rl_i(t) < h_i(x) < \ru_i(t)$, for which both $\ru_i(t)$ and $\rl_i(t)$ are bounded. Without loss of generality, we assume that the first $p$ constraints in \eqref{eq:outputconst}, i.e., for $i = \{1,\ldots, p\}, 0 \leq p \leq m$, are funnel constraints, $q$ LBO constraints are indexed by $i = \{p+1,\ldots, q\}, 0 \leq q \leq m - p$ in \eqref{eq:outputconst}, and the remaining $m-p-q$ constraints represent UBO constraints for which $i = \{p+q+1,\ldots, m\}$ in \eqref{eq:outputconst}. We also assume that each funnel constraint is well-defined in the sense that for each $i = \{1, \ldots, p\}$, there exists a $\epsilon_i > 0$ such that $\ru_i(t) - \rl_i(t) \geq \epsilon_i, \forall t\geq0$, which indicates that the $p$ funnel constraints in \eqref{eq:outputconst} are \textit{separately feasible}.

\begin{remark}
	In our problem formulation, requirements like regulation and tracking translate to fulfilling specific types of output constraints. For instance, take \eqref{eq:sys_dynamics_firstorder} with $h(x) = x$, where $x \in \R$. Let $x_{d}(t)$ be a continuously differentiable and bounded reference signal, whose derivative is also bounded. The tracking goal can be achieved by ensuring $-\rho_d(t) < x - x_d(t) < \rho_d(t)$, with $\rho_d(t)$ being a continuously differentiable and bounded signal that approaches a small neighborhood of zero (akin to tracking under prescribed performance \cite{bechlioulis2008robust}). This requirement can equivalently be expressed as $-\rho_d(t) + x_d(t) < x < \rho_d(t) + x_d(t)$, serving as a time-varying output constraint for \eqref{eq:sys_dynamics_firstorder}.
\end{remark}

Define next $\Ob(t)$ based on \eqref{eq:outputconst} as:
\begin{equation} \label{eq:omeg_x_t}
	\Ob(t) \coloneqq \{ x \in \R^n \mid  \rl_i(t) < h_i(x) < \ru_i(t), i \in \I\}.
\end{equation}

In this paper, our goal is to design a low-complexity continuous robust feedback control law $u(t,x)$ for \eqref{eq:sys_dynamics_firstorder} such that the closed-loop system trajectories satisfy the time-varying output constraints \eqref{eq:outputconst} $\forall t > T \geq 0$, where $T$ is a user-defined finite time after which the output constraints are surely satisfied (i.e., $x(t;x_0) \in \Ob(t), \forall t > T \geq 0$). Note that this problem  reduces to establishing only invariance of $\Ob(t)$ for all $t \geq 0$, if $x(0) \in \Ob(0)$ ($T = 0$). On the other hand, having $x(0) \notin \Ob(0)$ indicates establishing: (i) finite-time convergence to $\Ob(T)$, and (ii) ensuring invariance of $\Ob(t)$, for all $t > T$. 

\section{Main Results}

\subsection{A Scalar Variable for Constraints Satisfaction}
\label{subsec:scalar_metric}
Here, inspired by works in \cite{lindemann2017prescribed,lindemann2021funnel}, where a funnel-based control design is developed for handling Signal Temporal Logic (STL) specifications for nonlinear systems, we present the signed distance from the boundary of the time-varying output constrained set as a useful scalar variable that encodes checking both feasibility and satisfaction of the constraints.

Notice that the $m$ output constraints in \eqref{eq:outputconst} can be re-written in the following format:
	\begin{subequations}\label{eq:predicate_const_rep}
	\begin{align} 
		&\resizebox{.93\hsize}{!}{$\!\!\!\!\! \begin{cases}
				\psi_{2i-1}(t,x) = h_i(x) - \rl_i(t) > 0, \; \text{(funnel constraints)} \\
				\psi_{2i}(t,x) = \ru_i(t) - h_i(x) > 0,  \;  i \in \{1,\ldots,p\}
			\end{cases}$} \!\!\!\!\!\!\!\!\!\!\!\! \label{eq:predicate_funnel_rep} \\
		&\resizebox{.93\hsize}{!}{$\!\!\!\!\! \begin{cases}
				\psi_{i}(t,x) = h_j(x) - \rl_j(t) > 0, \quad \text{(LBO constraints)} \\
				\; i \in \{2p+1,\ldots, 2p+q\}, \; j \in \{p+1,\ldots, p+q\}, \\
				\psi_{i}(t,x) = \ru_j(t) - h_j(x) > 0, \quad \text{(UBO constraints)} \\ 
				\; i \in \{2p+q+1,\ldots, m+p\}, \;  j \in \{p+q+1,\ldots, m\}.
			\end{cases}$} \!\!\!\!\!\!\!\!\!\!\!\! \label{eq:predicate_onesided_rep}   
	\end{align}
\end{subequations}
Now, without loss of generality, consider all these $m+p$ constraints in \eqref{eq:predicate_const_rep} as:
\begin{equation} \label{eq:psi_constr}
	\psi_i(t,x) > 0, \quad i \in \I_{\psi} \coloneqq \{1,\ldots,m+p\},
\end{equation}
where $\psi_i: \R_{\geq 0} \times \R^{n} \rightarrow \R$ are $\C^2$ in $x$ and $\C^1$ in $t$. As a result, one can re-write \eqref{eq:omeg_x_t} as: 
\begin{equation} \label{eq:omega_y}
	\Ob(t) = \{ x \in \R^n \mid \psi_i(t,x) > 0, \forall i \in \I_{\psi} \}.
\end{equation}

Now define the scalar function $\alphb: \R_{\geq 0} \times \R^n \rightarrow \R$, as:
\begin{equation} \label{eq:metric}
	\alphb(t,x) \coloneqq \min \{ \psi_1(t,x), \ldots , \psi_{m+p}(t,x) \},
\end{equation}
where $\alphb(t,x)$ represents the signed (minimum) distance from the boundary of $\cl(\Ob(t))$, which is the closure of the time-varying output constrained set $\Ob(t)$ in \eqref{eq:omega_y}. In this respect, one can re-write \eqref{eq:omega_y} as the zero super level set of $\alphb(t,x)$:
\begin{equation} \label{eq:omega_alpha_bar}
	\Ob(t) = \{x \in \R^n \mid \alphb(t,x) > 0 \}.
\end{equation}
Note that if $\alphb(t^{\prime},x)<0$, then \textit{at least} one constraint is not satisfied at $t = t^{\prime}$, while $\alphb(t,x)>0, \forall t \geq 0$ means that all constraints are satisfied for all times. Owing to the usage of the $\min$ operator in \eqref{eq:metric}, in general, $\alphb(t,x)$ is a continuous but nonsmooth function, therefore, to facilitate the controller design and stability analysis we will consider the smooth under-approximation of $\alphb(t,x)$ given by the following log-sum-exp function \cite{gilpin2020smooth}:
\begin{equation}\label{smooth_alph}
		\alpha(t,x) \coloneqq -\frac{1}{\nu} \ln \Big( \sum_{i=1}^{m+p} e^{- \nu  \, \psi_i(t,x)} \Big) 
		\leq \alphb(t,x),
\end{equation}
for which $\alphb(t,x) \leq \alpha(t,x) + \frac{1}{\nu} \ln(m+p)$ holds and $\nu>0$ is a tuning coefficient whose larger values gives a closer (under) approximation (i.e, $\alpha(t,x) \rightarrow \alphb(t,x)$ as $\nu \rightarrow \infty$). Note that, $\alpha(t,x)$ represents the signed distance from the boundary of a \textit{smooth inner-approximation} of $\cl(\Ob(t))$. Therefore, ensuring $\alpha(t,x) > 0, \forall t \geq 0$ guarantees $\alphb(t,x) > 0, \forall t \geq 0$ and thus the satisfaction of \eqref{eq:psi_constr} (equivalently \eqref{eq:outputconst}). Define $\Omega(t) \subset \Ob(t)$ as the smooth inner-approximation of the set $\Ob(t)$, given by:
\begin{equation} \label{eq:omega_alpha}
	\Omega(t) \coloneqq \{x \in \R^n \mid \alpha(t,x) > 0 \},
\end{equation} 
and we have $x \in \Omega(t) \Rightarrow x \in \Ob(t)$. Moreover, when $\Ob(t)$ is bounded, then $\Omega(t)$ will be bounded. Let $\partial \cl(\Ob(t))$ and $\partial \cl(\Omega(t))$ indicate the boundaries of the sets $\cl(\Ob(t))$ and $\cl(\Omega(t))$, respectively. Fig.\ref{fig:example1_cases} depicts snapshots of $\Ob(t)$ and $\Omega(t)$ with $\nu = 2$ in \eqref{smooth_alph} for the following examples:

\begin{example} \label{ex:concave_example}
		 Consider $h(x) = [h_1(x), h_2(x), h_3(x)]^{\top}$, where $h_1(x) = x_1$, $h_2(x) = x_2 - x_1$, and $h_3(x) = 0.3x_1^2 + x_2$ and let the output constraints be $\rl_1(t) < h_1(x) < \ru_1(t)$ (funnel constraint), $\rl_2(t) < h_2(x)$ (LBO constraint), and $h_3(x) < \ru_3(t)$ (UBO constraint), respectively. Fig.\ref{fig:omega_y_bar_and_Omega_y} depicts a snapshot of the time-varying output constrained set and its smooth inner approximation, for which $-\rl_1(t)=\ru_1(t) =2$, $\rl_2(t) = -2$, and $ \ru_3(t)= 4$.  
\end{example}

\begin{example}\label{ex:independent_funnels_example}
	Consider $h(x) = [h_1(x), h_2(x)]^{\top}$, with $h_1(x) = x_1$ and $h_2(x) = 0.3x_1^2 - x_2$ and let the output constraints be $\rl_1(t) < h_1(x) < \ru_1(t)$ and $\rl_2(t) < h_2(x) < \ru_2(t)$ (two funnel constraints), respectively. Fig.\ref{fig:condition II} depicts a snapshot of the time-varying output-constrained set and its smooth inner approximation, for which $\rl_1(t)=-3, \ru_1(t) =2$, and $\rl_2(t) = -3, \ru_2(t) = 1$. 
\end{example}   

\begin{figure}[!tbp]
	\centering
	\begin{subfigure}[t]{0.43\linewidth}
		\centering
		\includegraphics[width=\linewidth]{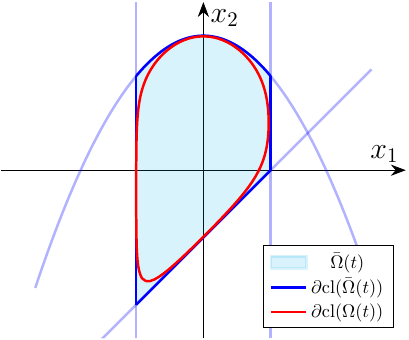}
		\caption{}
		\label{fig:omega_y_bar_and_Omega_y}
	\end{subfigure}%
	~
	\begin{subfigure}[t]{0.43\linewidth}
		\centering
		\includegraphics[width=\linewidth]{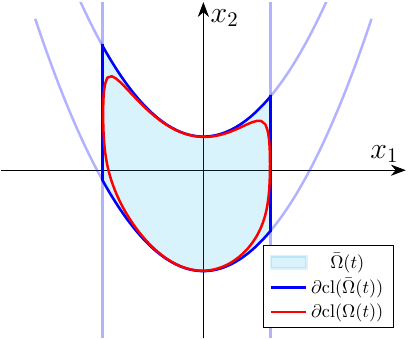}
		\caption{}
		\label{fig:condition II}
	\end{subfigure}%
	\caption{Snapshots of $\Ob(t)$ and its corresponding inner-approximation under \eqref{smooth_alph} for two different examples. \vspace{-0.4cm}}
	\label{fig:example1_cases}
\end{figure}

\begin{assumption} \label{assum:coercive_alphabar}
	For all $t\geq 0$, $-\alphb(t,x)$ is coercive (radially unbounded) in $x$, i.e, $-\alphb(t,x) \rightarrow +\infty$ as $\|x\| \rightarrow +\infty$. 
\end{assumption}

Note that while the focus of this work is on satisfying the output constraints defined in \eqref{eq:outputconst}, it is also required to design $u(t,x)$ such that the states of the closed-loop system \eqref{eq:sys_dynamics_firstorder} remain bounded for all times. Therefore, it is essential to ensure that the output-constrained set $\Ob(t)$ is bounded, which, in turn, guarantees the boundedness of $\|x\|$ for all times. The above assumption provides a necessary and sufficient condition for the boundedness of $\Ob(t)$ (resp. $\Omega(t)$) $\forall t \geq 0$, which is established in the following lemma:
\begin{lemma} \label{lem:Omegab_bounded}
	Under Assumption \ref{assum:coercive_alphabar}, $\Ob(t)$ (resp. $\Omega(t)$) is a bounded set for all $t \geq 0$. 
\end{lemma}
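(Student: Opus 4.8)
The plan is to exploit directly the zero super-level-set characterization of $\Ob(t)$ in \eqref{eq:omega_alpha_bar} together with the coercivity hypothesis, so that boundedness follows simply by containing the set in a ball. The underlying fact is the standard one that sublevel sets of a coercive function are bounded; the only thing to track is that coercivity is assumed pointwise in $t$, which is precisely what is needed since the lemma claims boundedness pointwise in $t$.

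First I would fix an arbitrary $t \geq 0$ and recall from \eqref{eq:omega_alpha_bar} that $\Ob(t) = \{x \in \R^n \mid \alphb(t,x) > 0\}$. By Assumption \ref{assum:coercive_alphabar}, $-\alphb(t,\cdot)$ is coercive, i.e. $-\alphb(t,x) \to +\infty$ as $\|x\| \to +\infty$. Applying the definition of coercivity with threshold value $0$, there exists a radius $R_t > 0$ (depending on $t$) such that $\|x\| > R_t$ implies $-\alphb(t,x) > 0$, equivalently $\alphb(t,x) < 0$, equivalently $x \notin \Ob(t)$. Taking the contrapositive, every $x \in \Ob(t)$ satisfies $\|x\| \leq R_t$, so $\Ob(t)$ is contained in the closed ball of radius $R_t$ and is therefore bounded. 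Since $t$ was arbitrary, this holds for all $t \geq 0$.

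For the parenthetical claim about $\Omega(t)$, I would invoke the under-approximation inequality $\alpha(t,x) \leq \alphb(t,x)$ from \eqref{smooth_alph}, which yields $\alpha(t,x) > 0 \Rightarrow \alphb(t,x) > 0$ and hence $\Omega(t) \subseteq \Ob(t)$; boundedness of $\Omega(t)$ then follows immediately as a subset of the bounded set $\Ob(t)$. Alternatively, since $\alpha \leq \alphb$ forces $-\alpha(t,\cdot)$ to be coercive as well, the identical ball argument applies verbatim to $\Omega(t)$.

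I do not anticipate a genuine obstacle here. The argument is elementary and purely topological, requiring neither the explicit $\min$/log-sum-exp structure of $\alphb$ and $\alpha$ nor any uniformity of the coercivity bound over $t$. The only subtlety worth stating explicitly is that $R_t$ may depend on $t$, which is consistent with the lemma asserting boundedness for each fixed $t$ rather than a single uniform bound; a uniform radius would require strengthening Assumption \ref{assum:coercive_alphabar} to uniform coercivity in $t$, which is not claimed and not needed for this statement.
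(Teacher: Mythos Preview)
Your proof is correct and follows essentially the same approach as the paper: fix $t$, use coercivity of $-\alphb(t,\cdot)$ to conclude that the zero super-level set is bounded, then handle $\Omega(t)$ via the inequality $\alpha \leq \alphb$. The paper's version cites external references for the bounded-sublevel-set fact and, for $\Omega(t)$, establishes that $-\alpha$ is itself coercive (via the two-sided bound $\alphb - \tfrac{1}{\nu}\ln(m+p) \leq \alpha \leq \alphb$) rather than using the subset inclusion $\Omega(t)\subseteq\Ob(t)$, but these are cosmetic differences and your direct argument is equally valid.
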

\begin{proof}
	See Appendix \ref{appen:proof_lemma_Omegab_bounded}.
\end{proof}
Note that, in Assumption \ref{assum:coercive_alphabar}, $-\alphb(t,x)$ should approach $+\infty$ along \textit{any path} within $\R^n$ on which $\|x\|$ tends to infinity. Define $h_f(x) \coloneqq \col(h_i) \in \R^p, i = \{1,\ldots, p\}$, $h_{\mathrm{L}}(x) \coloneqq \col(h_i) \in \R^q, i = \{p+1,\ldots, p+q\}$, and $h_{\mathrm{U}}(x) \coloneqq \col(h_i) \in \R^{m-p-q}, i = \{p+q+1,\ldots, m\}$, as the stacked vectors of system outputs associated with funnel, LBO, and UBO constraints in \eqref{eq:outputconst}, respectively. The following lemma provides explicit conditions on $h_i(x), i \in \I$, which ensure $-\alphb(t,x)$ (resp. $-\alpha(t,x)$) to be coercive. 

\begin{lemma}\label{lem:alphb_coercive_h}
	 The function $-\alphb(t,x)$ (resp. $-\alpha(t,x)$) is coercive in $x$ for all $t\geq 0$ if and only if at least one of the following conditions holds: (I): $\|h_f(x)\| \rightarrow +\infty$, (II): one or more elements of $h_{\mathrm{L}}(x)$ approaches $-\infty$, (III) one or more elements of $h_{\mathrm{U}}(x)$ approaches $+\infty$, along any path in $\R^n$ as $\|x\| \rightarrow +\infty$.       
\end{lemma}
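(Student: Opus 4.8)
The plan is to reduce the coercivity of $-\alphb$ to the divergence behaviour of the individual predicate functions $\psi_i$, and then to translate that behaviour into conditions on the output maps $h_i$. Recall from \eqref{eq:metric} that $\alphb(t,x) = \min_{i \in \I_{\psi}} \psi_i(t,x)$, so $-\alphb(t,x) = \max_{i \in \I_{\psi}} \big(-\psi_i(t,x)\big)$. Since $\I_{\psi}$ is a \emph{finite} index set, this maximum diverges to $+\infty$ along a path $\|x\| \to +\infty$ if and only if at least one of the terms $-\psi_i(t,x)$ does: the ``if'' direction is immediate since the max dominates each term, and the non-trivial direction uses a pigeonhole argument, namely that if the maximum diverges then, as there are finitely many indices, some index attains the maximum infinitely often, and along the corresponding subsequence that single term diverges. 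Hence $-\alphb$ is coercive for all $t \geq 0$ precisely when, along every path to infinity, at least one $\psi_i(t,x) \to -\infty$.

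Next I would classify the predicates according to \eqref{eq:predicate_const_rep} and determine when each can diverge to $-\infty$. The key observation is that for every predicate the \emph{active} bound is a uniformly bounded function of $t$ (by the standing assumptions on $\rl_i,\ru_i$): for a funnel index $i \in \{1,\dots,p\}$ both $\rl_i$ and $\ru_i$ are bounded, for an LBO index the active bound $\rl_j$ is bounded, and for a UBO index the active bound $\ru_j$ is bounded. These bounded $\rho$-terms may therefore be absorbed into a constant, yielding, uniformly in $t$: a funnel predicate $\psi_{2i-1}=h_i-\rl_i$ or $\psi_{2i}=\ru_i-h_i$ diverges to $-\infty$ iff $|h_i(x)| \to +\infty$ for some $i \le p$, i.e.\ iff $\|h_f(x)\| \to +\infty$ (condition (I)); an LBO predicate $h_j-\rl_j$ diverges iff $h_j(x) \to -\infty$ (condition (II)); and a UBO predicate $\ru_j-h_j$ diverges iff $h_j(x) \to +\infty$ (condition (III)). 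Because the $h$-conditions are $t$-independent while the $\rho$-bounds are uniform in $t$, this characterization holds for all $t \geq 0$ simultaneously.

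Combining the two steps gives the claimed equivalence for $-\alphb$: coercivity holds iff along every path to infinity at least one $\psi_i \to -\infty$, iff along every such path at least one of (I), (II), (III) holds. To transfer the statement to the smooth surrogate $-\alpha$, I would invoke the two-sided sandwich recorded after \eqref{smooth_alph}, namely $\alpha(t,x) \le \alphb(t,x) \le \alpha(t,x) + \frac{1}{\nu}\ln(m+p)$. This shows that $|\alpha(t,x) - \alphb(t,x)|$ is bounded by the constant $\frac{1}{\nu}\ln(m+p)$, so $\alpha(t,x) \to -\infty$ iff $\alphb(t,x) \to -\infty$ along any path; hence $-\alpha$ is coercive iff $-\alphb$ is, and the equivalence with (I)--(III) carries over verbatim.

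I expect the main obstacle to be the careful handling of the quantifier ``along any path'' together with the $\min$/$\max$ of finitely many functions: one must argue rigorously that the maximum of finitely many terms escapes to $+\infty$ exactly when a single term does (the pigeonhole/subsequence step, which is also what lets $\|h_f\|\to+\infty$ be equated with ``some $|h_i|\to+\infty$''), and that the uniform-in-$t$ boundedness of the active $\rho$-bounds permits discarding them whenever the corresponding $h_i$ blows up. The remaining work — matching each divergent predicate to the correct one of (I), (II), (III) — is routine once the predicate classification from \eqref{eq:predicate_const_rep} is made explicit.
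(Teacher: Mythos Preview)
Your proposal is correct and follows essentially the same approach as the paper: both reduce coercivity of $-\alphb$ to the statement that along every path to infinity at least one $\psi_i \to -\infty$, then classify the predicates by constraint type and use the uniform boundedness of the active $\rho$-bounds to translate this into conditions (I)--(III) on the $h_i$. Your handling is in fact slightly more explicit than the paper's in two places---the pigeonhole/subsequence argument for the $\max$ of finitely many terms, and the direct use of the sandwich $\alpha \le \alphb \le \alpha + \tfrac{1}{\nu}\ln(m+p)$ to transfer the result to $-\alpha$---whereas the paper simply asserts these steps (the latter by reference to the proof of Lemma~\ref{lem:Omegab_bounded}).
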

\begin{proof}
	See Appendix \ref{appen:proof_lemma_alphb_coercive_h}.
\end{proof}

It is important to emphasize that the boundedness of $\Ob(t)$ is a technical requirement in this paper. One straightforward approach to guarantee this is by introducing a suitable auxiliary UBO constraint: $h_{\mathrm{aux}}(x) \coloneqq |x| < c_{\mathrm{aux}}$, where $c_{\mathrm{aux}}>0$ is a sufficiently large constant. This auxiliary constraint defines a large ball centered at the origin that encompasses all the other time-varying constraints in \eqref{eq:outputconst}. Having this auxiliary constraint along other constraints in \eqref{eq:outputconst} ensures the satisfaction of Lemma \ref{lem:alphb_coercive_h}'s condition.

\begin{remark} \label{rem:interpret_Omegab_bounded_in_h(x)}
	In Example \ref{ex:concave_example}, we have $h_f(x) = h_1(x)$, $h_L(x) = h_2(x)$, $h_U(x) = h_3(x)$, and one can verify that the condition of Lemma \ref{lem:alphb_coercive_h} is always satisfied along any path within $\R^2$ on which $\|x\| \rightarrow +\infty$. Therefore, $-\alphb(t,x)$ (and also $\alpha(t,x)$) is coercive and thus $\Ob(t)$ (resp. $\Omega(t)$) is bounded according to Lemma \ref{lem:Omegab_bounded} (see Fig.\ref{fig:omega_y_bar_and_Omega_y}). However, in this example, if we drop the LBO constraint $\rl_2(t) < h_2(x) = x_2 - x_1$ then $\Ob(t)$ (resp. $\Omega(t)$) will not be bounded anymore since along the path on which $x_1 = 0$ and $x_2 \rightarrow -\infty$, we get $h_f(x) = 0$, $h_U(x) \rightarrow -\infty$. Similarly, the boundedness of $\Ob(t)$ (resp. $\Omega(t)$) in Example \ref{ex:independent_funnels_example} can be verified.   
\end{remark}

Note that Assumption \ref{assum:coercive_alphabar} also guarantees existence of at least one global maximum for $\alphb(t,x)$ (resp. $\alpha(t,x)$) $\forall t \geq 0$ \cite[p. 27]{peressini1988mathematics}. In this regard, for each time instant $t$ we define:
\begin{equation} \label{eq:alphab_opt}
	\alphbopt(t) \coloneqq \max_{x \in \R^n} \alphb(t,x),
\end{equation} 
where $\alphbopt(t)$ denotes the maximum value of $\alphb(t,x)$ at time $t$. It is clear that if $\alphbopt(t) > 0$ the time-varying output constraints are feasible at time $t$, whereas $\alphbopt(t) \leq 0$ indicates that the constraints are infeasible at time $t$, thus impossible to be satisfied. Similarly, for a given $\nu$ in \eqref{smooth_alph} we can define:
\begin{equation}\label{eq:alpha_opt}
	\alphaopt(t) \coloneqq \max_{x \in \R^n} \alpha(t,x) \leq \alphbopt(t).
\end{equation}
From \eqref{eq:alpha_opt} and \eqref{smooth_alph} one can claim that having $\alphaopt(t) > 0$ is \textit{sufficient} for the feasibility of the time-varying output constraints \eqref{eq:outputconst} at time $t$. In this paper, for simplicity in the control design, we let the time-varying output constraints in \eqref{eq:outputconst} are mutually satisfiable for all times, which is summarized in the following assumption:

\begin{assumption}\label{assu:feasible_output_constr}
	There exists $\epsilon_{\alpha}>0$ such that $\alphaopt(t) \geq \epsilon_{\alpha}  > 0, \forall t\geq 0$, i.e., $\Omega(t)$ is non-empty (feasible) for all times.
\end{assumption}

\subsection{Designing A Single Funnel Constraint}
\label{subsec:single_funnel_const}

As mentioned in Subsection \ref{subsec:scalar_metric}, the satisfaction of \eqref{eq:outputconst} can be ensured by keeping $\alpha(t,x(t;x_0))$ positive. Therefore, the control design problem in Section \ref{sec:problemformulation} boils down to designing $u(t,x)$ for \eqref{eq:sys_dynamics_firstorder} such that if $\alpha(0,x_0) > 0$ then $\alpha(t,x(t;x_0)) > 0, \forall t\geq 0$, and if $\alpha(0,x_0) < 0$ then $\alpha(t,x(t;x_0)) > 0, \forall t\geq T$. To achieve this one can design a funnel-based control law to ensure the following \textit{single funnel constraint} for \eqref{eq:sys_dynamics_firstorder}: 
\begin{equation} \label{eq:alpha_funnelconst}
	\rlalph(t) < \alpha(t,x(t;x_0)) < \rualph(t),
\end{equation}
where $\rl_{\alpha}, \rualph: \Rpos \rightarrow \R$ are properly designed bounded and continuous functions of time with bounded derivatives. To avoid any ambiguity between the funnel constraint in \eqref{eq:alpha_funnelconst} and output funnel constraints in \eqref{eq:outputconst}, we refer to \eqref{eq:alpha_funnelconst} as \textit{$\alpha$-funnel constraint}. Note that \eqref{eq:alpha_funnelconst} is feasible (valid) when: \textbf{(i)} $\rualph(t) - \rlalph(t) \geq \delta_{\rho}, \forall t\geq 0$, for some $\delta_{\rho} > 0$, and \textbf{(ii)} $\rlalph(t) < \alphaopt(t), \forall t \geq 0$, since $\alpha(t,x) \leq \alphaopt(t), \forall t \geq 0$. 

In virtue of Assumption \ref{assu:feasible_output_constr}, i.e., $\alphaopt(t) > 0, \forall t \geq 0$, one can design $\rualph(t)$ such that $\rualph(t) - \alphaopt(t) \geq \bar{\varsigma} > 0, \forall t \geq 0$, and accordingly design $\rlalph(t)$ such that $\alphaopt(t) - \rlalph(t) \geq \underline{\varsigma} > 0, \forall t \geq 0$, to ensure conditions (i) and (ii) for the feasibility of \eqref{eq:alpha_funnelconst}. In particular, one can set $\rualph(t) = \rho_{\max}$, where $\rho_{\max}$ is a sufficiently large positive constant, such that $\rho_{\max} > \sup (\alphaopt(t))$. Then, for ensuring the satisfaction of \eqref{eq:outputconst} through imposing \eqref{eq:alpha_funnelconst}, one simple strategy for designing $\rlalph(t)$ is as follows: \textbf{(a)} if $\alpha(0,x_0) > 0$ (i.e., the constraints are initially satisfied), set $\rlalph(t) = 0$, and \textbf{(b)} if $\alpha(0,x_0) < 0$, design $\rlalph(t)$ such that $\rlalph(0) < \alpha(0,x_0) < 0$ and $\rl_{\alpha}(t\geq T) = 0$, i.e., the lower bound in \eqref{eq:alpha_funnelconst} increases with time so that it enforces $\alpha(t,x(t;x_0))$ to become positive for $t \geq T > 0$. In this respect, inspired by \cite{yin2020robust}, given a desirable $T\geq 0$, one can design:
\begin{equation}\label{eq:alpha_lower_bound}
	\rl_{\alpha}(t) = \begin{cases}
							\left( \frac{T-t}{T} \right)^{\frac{1}{1-\beta}} (\rho_0 - \rho_{\infty}) + \rho_{\infty}, & 	0 \leq t < T,\\
							\rho_{\infty}, & t \geq T,
					  \end{cases}
\end{equation}
where $\beta \in (0,1)$ is a constant. Note that $\rlalph(0) = \rho_0$ and $\rl_{\alpha}(t \geq T) = \rho_{\infty}$. Therefore, for designing $\rl_{\alpha}(t)$, for case (a) we set $\rho_0 = \rho_{\infty} = 0$ and for case (b) we set $\rho_0$ such that $\rlalph(0) = \rho_0 < \alpha(0,x_0) < 0$ and $\rho_{\infty} = 0$.

\begin{remark} \label{rem:compute_alpha_opt}
We highlight that, when Assumption \ref{assu:feasible_output_constr} holds, by taking a sufficiently large $\rho_{\max} > 0$ and setting $\rho_{\infty}=0$, the solution of the time-varying optimization problem \eqref{eq:alphab_opt} is not required for designing $\rualph(t)$ and $\rlalph(t)$ in \eqref{eq:alpha_funnelconst}. However, for taking $\rho_{\infty} > 0$ we require $\rho_{\infty} < \inf(\alphaopt(t))$ to hold for ensuring the feasibility of \eqref{eq:alpha_funnelconst}. Furthermore, given the readily available initial condition of the system \eqref{eq:sys_dynamics_firstorder}, ensuring $\rlalph(0) = \rho_0 < \alpha(0, x_0)$ is not restrictive.   
\end{remark}

\begin{remark} \label{rem:robustness_dgree}
	Under Assumption \ref{assu:feasible_output_constr}, choosing a larger $\rho_{\infty}$, under the condition that $0< \rho_{\infty} < \inf(\alphaopt(t))$, affects how well the time-varying output constraints should be satisfied for $t\geq T$, thus it constitutes a user-defined margin for constraints satisfaction.
\end{remark}

\subsection{Controller Design and Stability Analysis}
\label{subsec:control_design}

Now inspired by \cite{bechlioulis2014low}  we design a low-complexity model-free robust funnel controller using the PPC method for \eqref{eq:sys_dynamics_firstorder} to ensure the satisfaction of \eqref{eq:alpha_funnelconst}. In this respect, first, we define the normalized $\alpha(t,x)$ (w.r.t. the asymmetric funnels given by \eqref{eq:alpha_funnelconst}) as follows:
\begin{equation} \label{eq:normal_alpha}
	\alphah(t,x) \coloneqq \frac{\alpha(t,x) - 0.5 \, \rhos(t)}{0.5 \, \rhod(t)}, 
\end{equation}
where $\rhos(t) \coloneqq \rualph(t) + \rlalph(t)$ and $\rhod(t) \coloneqq \rualph(t) - \rlalph(t)$. Notice that $\alphah(t,x) \in (-1,1)$ if and only if $\alpha(t,x) \in (\rlalph(t), \rualph(t))$. Next, we introduce the following nonlinear transformation:
\begin{equation} \label{eq:mapped_alphah}
	\epialph(t,x) = \T(\alphah(t,x)) \coloneqq \ln \left( \frac{1+\alphah(t,x)}{1 - \alphah(t,x)} \right),
\end{equation}
where $\epialph$ is the unconstrained transformed signal corresponding to $\alpha(t,x)$ and $\T: (-1 , 1) \rightarrow (-\infty, +\infty)$ is a smooth strictly increasing bijective mapping, which satisfies $\T(0) = 0$. Note that maintaining boundedness of $\epialph$ enforces $\alphah(t,x) \in (-1,1)$, and thus the satisfaction of \eqref{eq:alpha_funnelconst}. 

To design the control law we proceed as follows: first, define $V(\epialph) = 0.5 \epialph^2$, which is a positive definite (implicitly time-varying) \textit{barrier function} associated with the $\alpha$-funnel constraint in \eqref{eq:alpha_funnelconst}. Note that $V(0) = 0$ and as $\alpha(t,x)$ approaches $\rlalph(t)$ or $\rualph(t)$ (i.e., as $\alphah(t,x)$ approaches $\pm1$) we get $V(\epialph) \rightarrow +\infty$. Next, from \eqref{eq:mapped_alphah}, with a slight abuse of notation, one can consider ${V}(t, x)$ and design a (gradient-based) control law as follows:
\begin{equation} \label{eq:PPC_control}
	u(t,x) \coloneqq - k \, \nabla_x V(t,x),
\end{equation}
where $k > 0$ is the control gain and $\nabla_x$ denotes the gradient with respect to $x$. Applying the chain rule in \eqref{eq:PPC_control} gives $u(t,x)$ more explicitly as:
\begin{flalign} \label{eq:PPC_control_explicit}
	&u(t,x) = - k \, \left( \dfrac{\partial V(\epialph)}{\partial \epialph} \, \dfrac{\partial \epialph(\alphah)}{\partial \alphah} \,  \dfrac{\partial \alphah(t,\alpha)}{\partial \alpha} \,  \dfrac{\partial \alpha(t,x)}{\partial x} \right)^{\top}& \nonumber \\
	& \quad =  - k   \left(\dfrac{\partial \alpha(t,x)}{\partial x} \right)^{\top} \!\! \xialph \, \epialph = -k \, \nabla_x \alpha(t,x) \, \xialph \, \epialph,&
\end{flalign}
where $\xialph \coloneqq \frac{\partial \epialph(\alphah)}{\partial \alphah} \, \frac{\partial \alphah(t,\alpha)}{\partial \alpha}$ is given by:
\begin{equation}\label{eq:xi}
	\xialph(t,x) = \dfrac{4}{ \rhod(t) ( 1- \alphah^2(t,x) )}. 
\end{equation} 

Recall that \eqref{eq:alpha_funnelconst} is ensured through keeping $\epialph$ bounded, which is equivalent to establishing boundedness of $V(\epialph)$ through applying \eqref{eq:PPC_control_explicit} in \eqref{eq:sys_dynamics_firstorder}. 

We require the following assumption on $\alpha(t,x)$ to avoid $\nabla_x \alpha(t,x) = \mathbf{0}_n$ at certain undesired singular points, which can lead to controllability loss issues when \eqref{eq:PPC_control_explicit} is applied to \eqref{eq:sys_dynamics_firstorder}.

\begin{assumption}\label{assu:alpha_globalmax}
	For all $t \geq 0$ the function $-\alpha(t,x)$ is invex, i.e., every critical point of $\alpha(t,x)$ is a (time-varying) global maximizer (see \cite[Theorem 2.2]{mishra2008invexity}). 
\end{assumption}

The following lemma gives some \textit{sufficient} conditions for ensuring Assumption \ref{assu:alpha_globalmax}.

\begin{lemma} \label{lem:global_max_suffi}
	The function $-\alpha(t,x)$ is invex $\forall t \geq 0$ if any of the following conditions holds: \\
	\textbf{I}. $\psi_i(t,x), \forall i \in \I_{\psi}$ in \eqref{eq:psi_constr} are concave in $x$ for all $t\geq 0$. \\
	\textbf{II}. All of the $m$ time-varying output constraints in \eqref{eq:outputconst} are funnel constraints such that: (i) $n = m = p$, (ii) the output map $y = h(x)$ in \eqref{eq:sys_dynamics_firstorder} is norm-coercive (i.e., $\|h(x)\| \rightarrow +\infty \quad \text{as} \quad \|x\| \rightarrow +\infty$), and (iii) the Jacobian matrix $J(x) \coloneqq \frac{\partial h(x)}{\partial x} \in \R^{n \times n}$ is full rank for all $x \in \R^{n}$.
\end{lemma}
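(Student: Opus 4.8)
The plan is to invoke the characterization of invexity cited as \cite[Theorem 2.2]{mishra2008invexity}, namely that a differentiable function is invex if and only if each of its stationary points is a global minimizer. Equivalently, $-\alpha(t,\cdot)$ is invex precisely when every critical point of $\alpha(t,\cdot)$ is a global maximizer. It therefore suffices to verify this stationary-point property, for each fixed $t \geq 0$, separately under each of the two sufficient conditions.

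For condition \textbf{I}, I would show the stronger statement that $-\alpha(t,\cdot)$ is convex, which gives invexity immediately since every stationary point of a differentiable convex function is a global minimizer. Rewriting \eqref{smooth_alph}, we have $-\alpha(t,x) = \frac{1}{\nu}\ln\big(\sum_{i=1}^{m+p} e^{-\nu\psi_i(t,x)}\big)$. When each $\psi_i(t,\cdot)$ is concave, each $-\nu\,\psi_i(t,\cdot)$ is convex because $\nu>0$; the log-sum-exp map $(z_1,\ldots,z_{m+p}) \mapsto \ln\sum_i e^{z_i}$ is convex and nondecreasing in every argument, so its composition with the convex functions $-\nu\,\psi_i$ is convex, and scaling by $1/\nu>0$ preserves convexity. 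Hence $-\alpha(t,\cdot)$ is convex, and therefore invex.

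For condition \textbf{II}, the idea is to reduce to condition \textbf{I} by a \emph{global} change of coordinates. Since $h$ is $\C^2$, its Jacobian $J(x)$ is nonsingular for all $x$ (full rank, with $n=m$), and $h$ is norm-coercive — hence proper — so I would invoke Hadamard's global inverse function theorem to conclude that $h:\R^n \to \R^n$ is a $\C^2$ diffeomorphism onto $\R^n$. Introducing $y = h(x)$, the $2p$ funnel predicates in \eqref{eq:predicate_funnel_rep} become the affine maps $y_i - \rl_i(t)$ and $\ru_i(t) - y_i$; writing $\tilde{\alpha}(t,y)$ for the corresponding log-sum-exp in these coordinates, the argument of condition \textbf{I} (affine $\Rightarrow$ concave) shows that $-\tilde{\alpha}(t,\cdot)$ is convex, hence invex, in $y$.

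It remains to transfer invexity back to the $x$-coordinates. By the chain rule $\nabla_x\big(-\alpha(t,x)\big) = J(x)^\top \nabla_y\big(-\tilde{\alpha}(t,h(x))\big)$, and since $J(x)$ is invertible everywhere, $x^\ast$ is a stationary point of $-\alpha(t,\cdot)$ if and only if $y^\ast = h(x^\ast)$ is a stationary point of $-\tilde{\alpha}(t,\cdot)$. By the convexity just established, such a $y^\ast$ is a global minimizer of $-\tilde{\alpha}(t,\cdot)$; because $h$ is a bijection onto $\R^n$, the two functions have the same range and $\min_x(-\alpha(t,x)) = \min_y(-\tilde{\alpha}(t,y)) = -\tilde{\alpha}(t,y^\ast) = -\alpha(t,x^\ast)$, so $x^\ast$ is a global minimizer of $-\alpha(t,\cdot)$. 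Every stationary point being a global minimizer, $-\alpha(t,\cdot)$ is invex. The hard part will be condition \textbf{II}: upgrading the pointwise local-diffeomorphism information (full-rank Jacobian) to a \emph{global} diffeomorphism through properness via Hadamard's theorem, and then confirming that invexity genuinely survives this coordinate change — which rests on $J(x)$ being invertible (so stationary points correspond bijectively) and on $h$ being onto $\R^n$ (so the global minimum value is preserved).
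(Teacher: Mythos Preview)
Your proof of Case~\textbf{I} is essentially the paper's: both establish that $-\alpha(t,\cdot)$ is convex (the paper phrases it via log-convexity of $e^{-\nu\psi_i}$, you via the composition rule for the log-sum-exp with convex inner functions), from which invexity is immediate.

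For Case~\textbf{II} your route is genuinely different and correct. The paper proceeds by direct computation: it writes $\nabla_x\alpha(t,x)=J(x)^\top\gamma(t,x)\,e^{\nu\alpha}$ with $\gamma_i=e^{-\nu(h_i-\rl_i)}-e^{-\nu(\ru_i-h_i)}$, uses the full-rank $J$ to reduce $\nabla_x\alpha=0$ to the system $h(x)=\tfrac12(\ru(t)+\rl(t))$, applies Hadamard--Palais to that shifted map to obtain a \emph{unique} critical point $x^\ast(t)$, and then separately computes the Hessian at $x^\ast(t)$ to verify it is a (hence the) global maximizer. You instead apply Hadamard's theorem once to $h$ itself, obtain a global $\C^2$ diffeomorphism, and pull the problem back to the $y$-coordinates where the predicates are affine and Case~\textbf{I} applies directly; the chain rule plus invertibility of $J$ then transfers the stationary-point characterization back. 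Your argument is shorter and more structural --- no Hessian calculation is needed, and uniqueness of the critical point comes for free from strict convexity of the log-sum-exp in $y$ rather than from a separate inverse-function step. The paper's explicit gradient and Hessian formulas, on the other hand, are reused later in the proof of Theorem~\ref{th:main} (e.g.\ equation~\eqref{eq:grad_alph}), so the computation is not wasted effort in context. One minor note: the surjectivity of $h$ you emphasize at the end is not strictly needed --- the inequality $-\alpha(t,x)=-\tilde\alpha(t,h(x))\ge -\tilde\alpha(t,y^\ast)=-\alpha(t,x^\ast)$ already follows from $y^\ast$ being a global minimizer of $-\tilde\alpha$ over all of $\R^n$.
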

\begin{proof}
	See Appendix \ref{appen:proof_lemma_globalmax}.
\end{proof}

\begin{remark}\label{rem:interpret_Lemm_globalmax_in_h}
	The concavity of $\psi_{i}(t,x), \forall t \geq 0$ in Lemma \ref{lem:global_max_suffi} can be understood by examining \eqref{eq:predicate_const_rep} in terms of $h_i(x), i \in \I$. Specifically, for funnel constraints, the functions $h_i(x), i=\{1,\ldots,p\}$ should be affine (linear) functions as $\psi_{2i}(t,x)$ and $\psi_{2i-1}(t,x), i=\{1,\ldots,p\}$ are concave only when $h_i(x)$ and $-h_i(x)$ are concave in \eqref{eq:predicate_funnel_rep}. Moreover, from \eqref{eq:predicate_onesided_rep} for LBO constraints $h_i(x), i = \{p+1,\ldots p+q\}$ should be concave, while for UBO constraints $h_i(x), i = \{p+q+1,\ldots m\}$ need to be convex. In this regard, the interior of \textit{any time-varying bounded convex polytope} in $\R^n$ can be considered as $\Ob(t)$, for which its corresponding $\alpha(t,x)$ satisfies Assumption \ref{assu:alpha_globalmax}. This is because for a bounded convex polytope all $h_i(x), i\in\I$ should be affine (thus condition I of Lemma \ref{lem:global_max_suffi} holds) and also Assumption \ref{assum:coercive_alphabar} is readily satisfied.
\end{remark}

\begin{remark}
	Followed by Remark \ref{rem:interpret_Lemm_globalmax_in_h} one can verify that Example \ref{ex:concave_example} (depicted in Fig. \ref{fig:omega_y_bar_and_Omega_y}) satisfies condition I of Lemma \ref{lem:global_max_suffi}. Moreover, Example \ref{ex:independent_funnels_example} (depicted in Fig.\ref{fig:condition II}) satisfies condition II of Lemma \ref{lem:global_max_suffi} since in this example $n=m=p=2$, the Jacobian matrix of $h(x)$, $J(x) = \left[\begin{smallmatrix} 1 & 0 \\ 0.6x_1 & -1 \end{smallmatrix} \right]$ is full rank for all $x\in\R^2$, and $h(x)=h_f(x)$ is norm-coercive. Note that Example \ref{ex:independent_funnels_example} does not satisfy condition I of Lemma \ref{lem:global_max_suffi} as $h_2(x)$ in is not affine. It is worth emphasizing that condition II in Lemma \ref{lem:global_max_suffi} accurately captures the notion of independence between $n$ funnel constraints in $\R^n$. This means that the satisfaction of individual feasible funnel constraints does not interfere with each other, meaning that the funnel constraints are decoupled. 
\end{remark}

\begin{remark}
	Note that condition I of Lemma \ref{lem:global_max_suffi} is not enough for ensuring the boundedness of $\Omega(t)$. Indeed, to guarantee the boundedness of $\Omega(t)$, $\psi_{i}(t,x), i \in \I_{\psi}$ should be such that the condition of Lemma \ref{lem:alphb_coercive_h} is also met. However, regarding condition II of Lemma \ref{lem:global_max_suffi}, since $h(x)$ is norm-coercive and only funnel type constraints are considered, i.e., $h(x) = h_f(x)$, one can verify that condition I in Lemma \ref{lem:alphb_coercive_h} is already satisfied, which ensures the boundedness of $\Omega(t)$.
\end{remark}

The following theorem summarizes our main result:
\begin{theorem} \label{th:main}
	Consider the nonlinear input affine system \eqref{eq:sys_dynamics_firstorder} with the time-varying output constraints \eqref{eq:outputconst}. Let $\rlalph(t), \rualph$ be designed based on the discussion in Section \ref{subsec:single_funnel_const}, such that $\rlalph(0) < \alpha(0,x_0) < \rualph(0)$. Under Assumptions 1-6 the feedback control law \eqref{eq:PPC_control_explicit} guarantees the satisfaction of the $\alpha$-funnel constraint \eqref{eq:alpha_funnelconst} for all times, as well as the boundedness of all closed-loop signals.
\end{theorem}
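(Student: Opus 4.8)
The plan is to run the standard prescribed-performance / funnel-control argument in three stages: local existence of a unique maximal solution, an a priori bound keeping the normalized error strictly inside the funnel, and a no-escape argument extending the maximal interval to $[0,\infty)$. I would work entirely with the transformed error $\epialph=\T(\alphah)$ and the barrier function $V=\tfrac12\epialph^2$, never using knowledge of $f$, $g$ or $\bar w$ (consistent with the model-free setting of Assumption \ref{assum:uncertain}); these enter only as bounded but unknown quantities. For local existence, the hypothesis $\rlalph(0)<\alpha(0,x_0)<\rualph(0)$ gives $\alphah(0,x_0)\in(-1,1)$, so $\epialph(0,x_0)$ and hence $V$ are finite at $t=0$. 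On the open set where $\alphah(t,x)\in(-1,1)$, the regularity of the $\psi_i$ ($\C^2$ in $x$, $\C^1$ in $t$) makes $\alpha$, $\nabla_x\alpha$, $\xialph$, $\epialph$ and hence the closed-loop vector field $f(x)+g(x)u(t,x)+w(t)$ with $u$ from \eqref{eq:PPC_control_explicit} locally Lipschitz in $x$ and piecewise continuous in $t$; the standard maximal-solution theorem then yields a unique solution on some $[0,\tau_{\max})$ with $\alphah(t,x(t))\in(-1,1)$ throughout.

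The core is the a priori bound. Differentiating $V$ along the solution, using $\tfrac{\partial\epialph}{\partial x}=\xialph\,\tfrac{\partial\alpha}{\partial x}$ and substituting $u=-k(\tfrac{\partial\alpha}{\partial x})^\top\xialph\epialph$, gives
\[
\dot V=\epialph\frac{\partial\epialph}{\partial t}+\xialph\epialph\frac{\partial\alpha}{\partial x}\big(f(x)+w(t)\big)-k\,\xialph^2\epialph^2\,\frac{\partial\alpha}{\partial x}g_s(x)\Big(\frac{\partial\alpha}{\partial x}\Big)^\top,
\]
where only the symmetric part $g_s$ survives the quadratic form. Four observations close the estimate. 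First, since $\alpha(t,x)\le\alphaopt(t)$ while $\rualph=\rho_{\max}>\sup_t\alphaopt(t)$, the trajectory satisfies $\alpha(t,x(t))<\rualph(t)$ with a uniform gap, so $\alphah$ is automatically bounded away from $+1$ and only $\alphah\to-1$ must be excluded. Second, while $\alpha(t,x(t))\ge\rlalph(t)$, the coercivity of $-\alpha$ (Assumption \ref{assum:coercive_alphabar}, via Lemma \ref{lem:Omegab_bounded}) together with the uniform boundedness of $\rlalph$ confines $x(t)$ to a compact set on which $f$, $w$, $\tfrac{\partial\alpha}{\partial x}$ and $\tfrac{\partial\alpha}{\partial t}$ are uniformly bounded. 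Third, the region where $V$ is large is precisely where $\alpha$ is near $\rlalph(t)$, which the funnel design keeps bounded away from $\alphaopt(t)$ (feasible by Assumption \ref{assu:feasible_output_constr}); hence no critical point of $\alpha$ lies there by invexity (Assumption \ref{assu:alpha_globalmax}), and $\|\nabla_x\alpha\|$ is bounded below by a positive constant on that compact region. Fourth, $g_s$ is uniformly positive definite (Assumption \ref{assum:symm_g}). Combining these, the damping term is of order $(\xialph\epialph)^2$ and dominates the indefinite drift, disturbance and time-variation terms, which are of order $\xialph\epialph$; hence there is a threshold $\bar V$ with $\dot V\le 0$ whenever $V\ge\bar V$. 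This yields $V(t)\le\max\{V(0),\bar V\}$, so $|\epialph(t)|\le\epialphb$ and $\alphah(t,x(t))$ stays in a compact subset of $(-1,1)$ for all $t\in[0,\tau_{\max})$.

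Finally, the uniform bound on $\alphah$ together with the compactness of the state trajectory shows the solution remains in a fixed compact subset of the open domain on $[0,\tau_{\max})$; by the maximal-solution theorem this forces $\tau_{\max}=\infty$. Consequently $\rlalph(t)<\alpha(t,x(t))<\rualph(t)$ for all $t\ge0$, which is the $\alpha$-funnel constraint \eqref{eq:alpha_funnelconst}, and $\epialph$, $\xialph$, $x$ and the control $u$ are all bounded, i.e.\ all closed-loop signals remain bounded.

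The main obstacle is the third and fourth observations: upgrading the pointwise facts ``critical points are global maxima'' (invexity) and ``$g_s\succ0$'' into a uniform strict-negativity estimate. Two points are delicate: (i) obtaining a positive lower bound on $\|\nabla_x\alpha\|$ that is uniform in $t$ over the relevant compact annular region away from the maximizer; and (ii) tracking the blow-up rates, since $\xialph$ and $\partial\epialph/\partial t$ both diverge like $1/(1-\alphah^2)$ as $\alphah\to-1$, to confirm that the $(\xialph\epialph)^2$ damping genuinely overwhelms the lower-order indefinite terms uniformly in $t$.
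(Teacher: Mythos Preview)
Your proposal is correct and follows essentially the same approach as the paper's proof: local existence on a maximal interval via local Lipschitz continuity of the closed-loop vector field on the open set where $\alphah\in(-1,1)$, an a priori bound on $V=\tfrac12\epialph^2$ from the $\dot V$ estimate (exploiting that only $\alphah\to-1$ is possible, coercivity for state compactness, invexity for a gradient lower bound near $\rlalph$, and positive definiteness of $g_s$), and finally the no-escape conclusion $\tau_{\max}=\infty$. The paper frames the second stage as a proof by contradiction (assume $\tau_{\max}<\infty$, derive $V$ bounded, contradiction) rather than a direct threshold $\bar V$, but the content and the delicate points you flag---the uniform-in-$t$ lower bound on $\|\nabla_x\alpha\|$ near $\rlalph$ and the rate comparison between the $(\xialph\epialph)^2$ damping and the $\xialph\epialph$ drift---are handled in exactly the same way and with the same level of rigor.
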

\begin{proof}
	See Appendix \ref{appen:proof_theorem}.
\end{proof}

\begin{remark}
	We underline that the controller developed in \eqref{eq:PPC_control_explicit} adeptly manages coupled time-varying output constraints, such as those exemplified in Example \ref{ex:concave_example}. This sets it apart from previous controller designs reliant on methods like Time-varying Barrier Lyapunov Functions, Funnel Control, and Prescribed Performance Control \cite{tee2011control, ilchmann2002tracking, bechlioulis2008robust}, which often encounter challenges when dealing with such intricate scenarios. Furthermore, unlike these approaches and the work in \cite{yin2020robust}, our method does not necessitate the initial satisfaction of all output constraints and provides the added advantage of customizable finite-time constraint satisfaction. Furthermore, this work builds upon the methodology presented in \cite{lindemann2017prescribed, lindemann2021funnel}, which was initially designed to manage STL specifications by transitioning between time-invariant constrained sets. However, our work takes a step further by extending the proposed methodology to effectively tackle scenarios involving time-varying constrained sets.
\end{remark}

\section{Simulation Results}
\label{sec:simu_results}

Consider an unstable dynamical system \eqref{eq:sys_dynamics_firstorder}, where 
\begin{equation*}
	\resizebox{1\hsize}{!}{$ f(x) = \begin{bmatrix}
	 -x_1^2 x_2 - x_1^3 - e^{-x_1^2-x_2^2}  \\
	 0.1 x_2^2  + x_1^2 + \sin(x_1 x_2) 
	\end{bmatrix}, \;
	g(x) = \begin{bmatrix}
		x_2^2 + 1 &  \cos(x_1) \\
		\sin(x_2) & x_1^2 + 2
	\end{bmatrix},$}
\end{equation*}
and $w(t) = [w_1(t), w_2(t)]^\top \in \R^2$, in which $w_1(t) = 1.5 \sin(2t + \frac{\pi}{3}) + 3 \cos(3t + \frac{3\pi}{7})$ and $w_2(t) = 0.5 \sin(3t) e^{\cos(2t+\frac{\pi}{3})+ 1}$. Moreover, let the output map $h(x)$ and the corresponding time-varying output constraints for \eqref{eq:sys_dynamics_firstorder} be the same as Example \ref{ex:concave_example} with $\ru_1(t) = 3\sin(0.3t)$, $\rl_1(t) = - 2 + 2.5\sin(0.3t)$, $\rl_2(t) = - \cos(0.3t)$, and $\ru_3(t) = 3.5 - \cos(0.3t)$. The considered output constraints ensure $\Omega(t)$ to be feasible at all times. Indeed, solution of \eqref{eq:alpha_opt} for a sufficiently long time horizon reveals that $\sup(\alphaopt(t)) < 1.1 $, and $\inf(\alphaopt(t)) > 0.3$ (see Fig. \ref{fig:ex1_alpha_evolue_x_evolu}, top left), hence, Assumption \ref{assu:feasible_output_constr} holds. The initial condition of the system is assumed to be $x_0 = [2, -2.5]^\top$, which gives $\alpha(0,x_0) < 0$. Followed by the discussion in Subsection \ref{subsec:single_funnel_const} and having the knowledge on $\sup(\alphaopt(t))$ and $\inf(\alphaopt(t))$ we can take $\rualph(t) = 1.5$ and set $\rho_{\infty} = 0.1$ in \eqref{eq:alpha_lower_bound}. If we took for granted the validity of Assumption \ref{assu:feasible_output_constr} in this simulation example, from Remark \ref{rem:compute_alpha_opt} we know that the control law \eqref{eq:PPC_control_explicit} does not necessarily require any information on $\alphaopt(t)$, e.g., we could set, $\rho_{\infty} = 0$ and $\rualph(t) = 50$ (a sufficiently large constant) without any prior knowledge on $\sup(\alphaopt(t))$ and $\inf(\alphaopt(t))$. Note that $\rho_0$ in \eqref{eq:alpha_lower_bound} is selected such that $\rlalph(0) < \alpha(0,x_0)<0$. Other parameters are set as follows: $\beta = 0.5$ and $T = 6$ for $\rlalph(t)$ in \eqref{eq:alpha_lower_bound}, and $k = 1$ and $\nu = 10$ in \eqref{eq:PPC_control_explicit} and \eqref{smooth_alph}, respectively. Fig.\ref{fig:ex1_alpha_evolue_x_evolu} (top left) depicts the evolution of $\alpha(t,x(t;x_0))$ under the proposed control law \eqref{eq:PPC_control_explicit}. Since $\alpha(0,x_0) < 0$, it is clear that the output constraints are not initially satisfied, however, by imposing the $\alpha$-funnel constraint \eqref{eq:alpha_funnelconst} $\alpha(t,x(t;x_0))$ becomes and then remains positive with a margin of $\rho_{\infty} = 0.1$. Snapshots of the evolution of $x(t;x_0)$ along with the time-varying constrained set $\Omega(t)$, for which $\partial\cl(\Omega(t)) = \{x \in \R^2 \mid \alpha(t,x) = 0\}$, are also illustrated in Fig.\ref{fig:ex1_alpha_evolue_x_evolu}.

\begin{figure}[tbp]
	\centering
	\begin{subfigure}[t]{0.57\linewidth}
		\includegraphics[width=\linewidth]{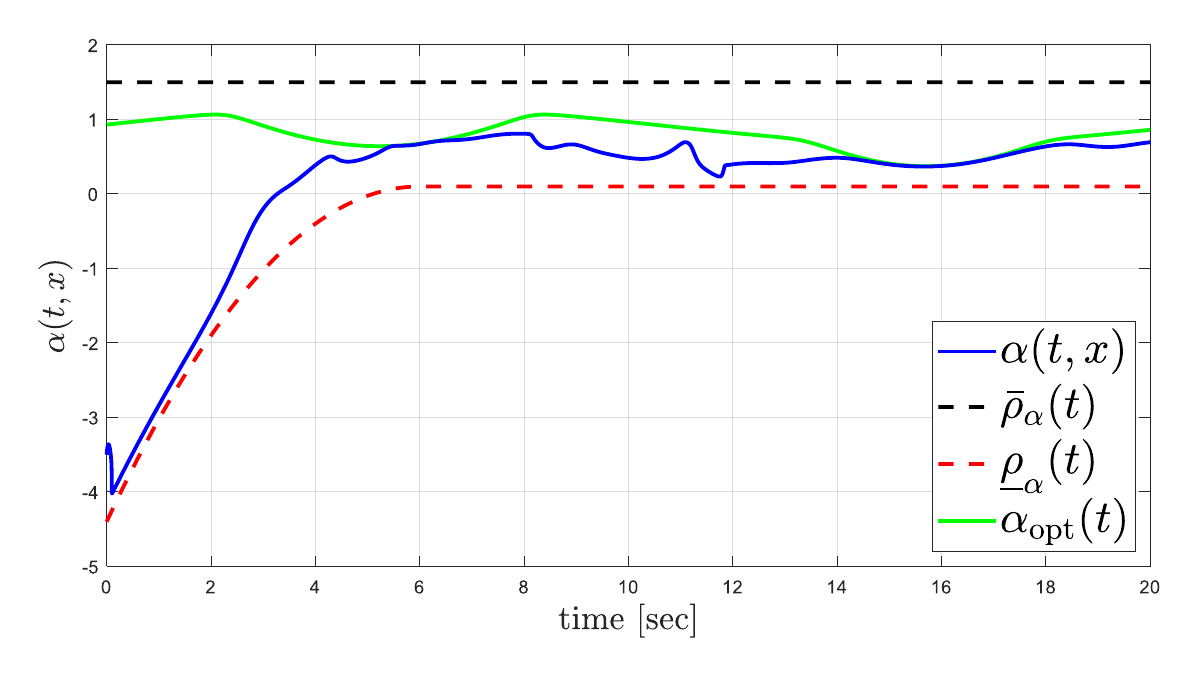}
	\end{subfigure}
	~ \hspace{0.55cm}
	\begin{subfigure}[t]{0.3\linewidth}
		\includegraphics[width=\linewidth]{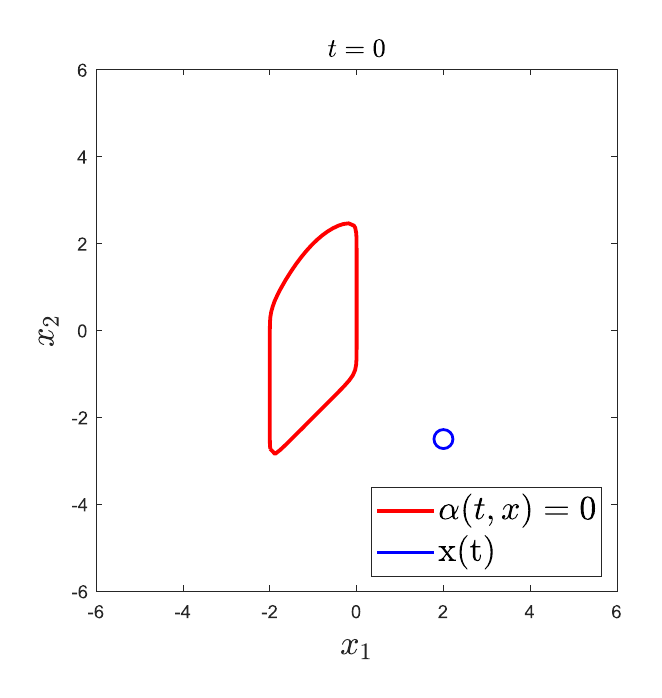}
	\end{subfigure}

	\begin{subfigure}[t]{0.3\linewidth}
		\includegraphics[width=\linewidth]{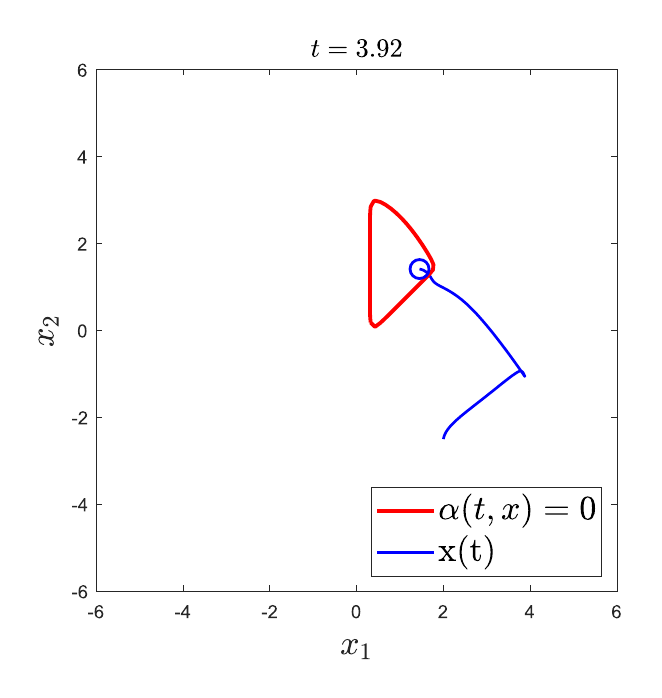}
	\end{subfigure}
	~
	\begin{subfigure}[t]{0.3\linewidth}
		\includegraphics[width=\linewidth]{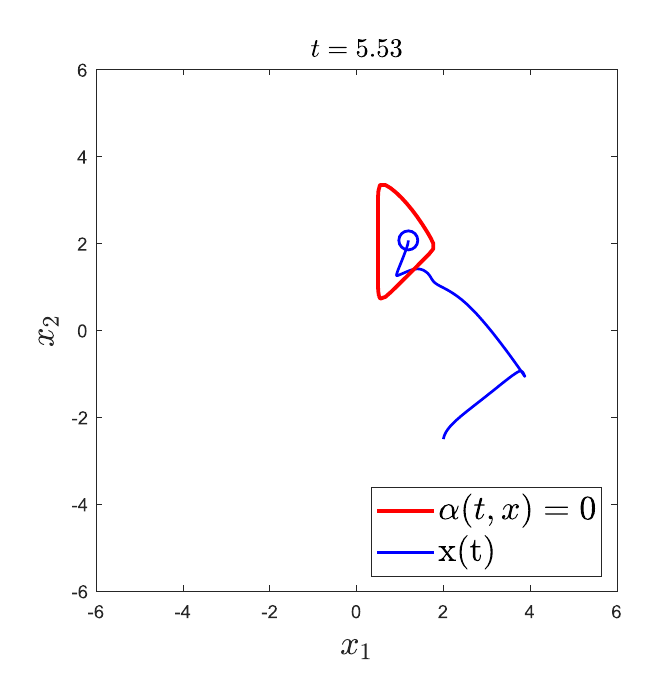}
	\end{subfigure}
	~
	\begin{subfigure}[t]{0.3\linewidth}
		\includegraphics[width=\linewidth]{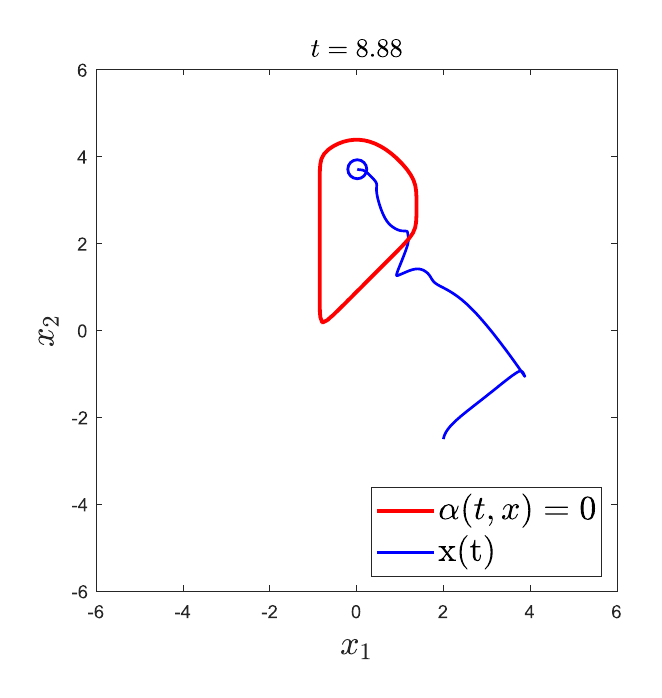}
	\end{subfigure}
	~
	\begin{subfigure}[t]{0.3\linewidth}
		\includegraphics[width=\linewidth]{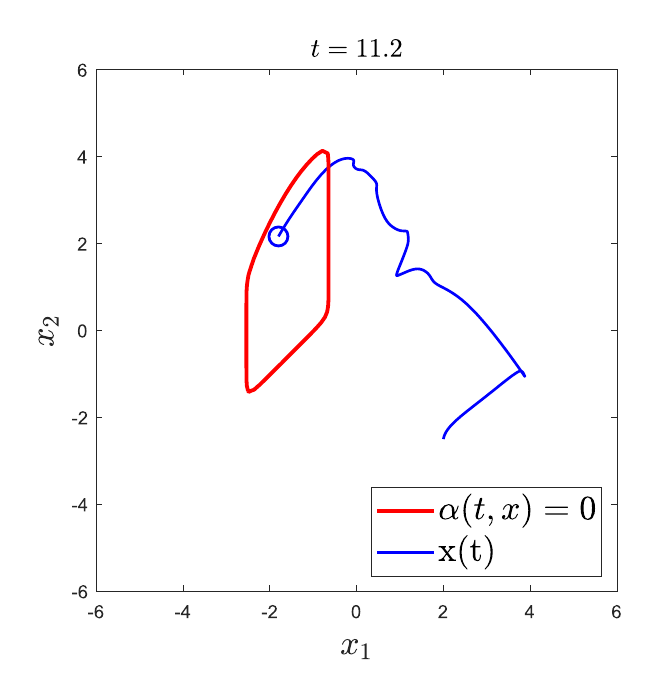}
	\end{subfigure}
	~
	\begin{subfigure}[t]{0.3\linewidth}
		\includegraphics[width=\linewidth]{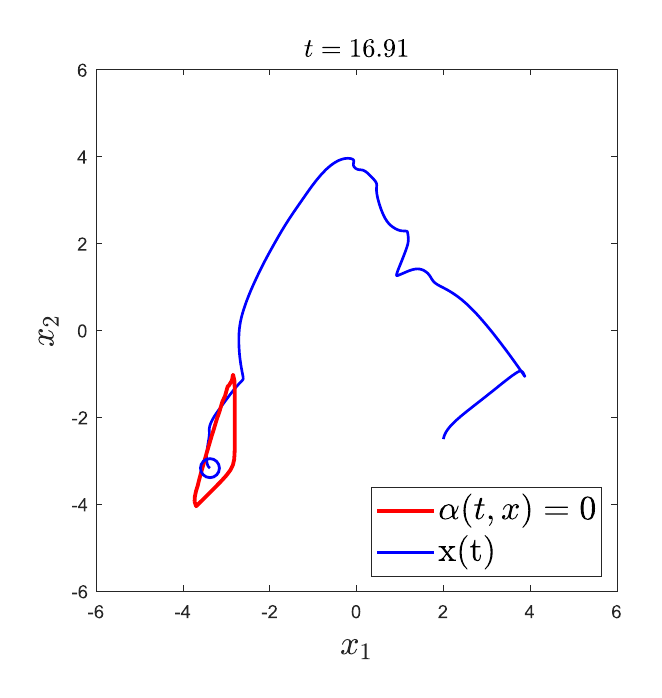}
	\end{subfigure}
	~
	\begin{subfigure}[t]{0.3\linewidth}
		\includegraphics[width=\linewidth]{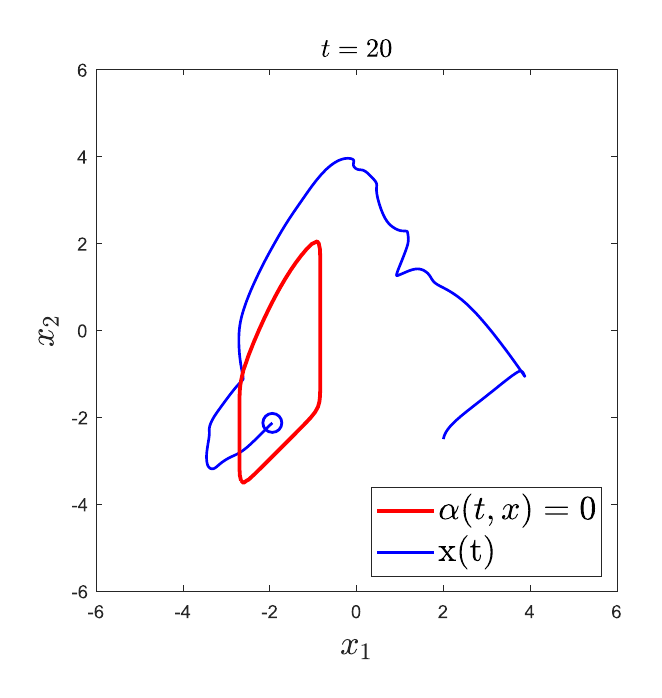}
	\end{subfigure}
	\caption{Evolution of $\alpha(t,x(t;x_0))$ (top-left) and $x(t)$ under \eqref{eq:PPC_control_explicit}.}
	\label{fig:ex1_alpha_evolue_x_evolu}
\end{figure}

\section{Conclusions}
\label{sec:conclu}

We considered the control design problem for nonlinear uncertain systems under (potentially coupled) multiple time-varying output constraints. The proposed method decodes the constraint satisfaction for multiple outputs as the satisfaction of a single funnel constraint on the signed distance w.r.t. the boundary of the time-varying output-constrained set. A robust low-complexity controller is designed based on the prescribed performance control approach to ensure the satisfaction of the single funnel constraint for the closed-loop system dynamics. The developed results in this paper are also valid for time-invariant output constraints as a special case. Future work will be devoted to the extension of the proposed control method to higher-order nonlinear systems as well as relaxing Assumption \ref{assu:feasible_output_constr}.

\bibliographystyle{ieeetr}
\bibliography{Refs}

\appendices

\section{Proof of Lemma \ref{lem:Omegab_bounded}}
\label{appen:proof_lemma_Omegab_bounded}

First, note that since all $\ru_i(t), \rl_i(t)$ in \eqref{eq:predicate_const_rep} are bounded, $\psi_i(t,x), i \in \I_{\psi}$ are bounded $\forall t\geq 0$ for any fixed $x$. Hence, the scalar continuous function $\alphb: \Rpos \times \R^n \rightarrow \R$ in \eqref{eq:metric} is bounded $\forall t\geq 0$ for any fixed $x$. Define $\alphb_{t}(x) \coloneqq \alphb(t,x)$. Owing to Assumption \ref{assum:coercive_alphabar}, $-\alphb_{t}(x)$ is coercive in $x$ for each $t$, therefore, from \cite[Proposition 11.12]{bauschke2011convex} all super-level sets $\alphb_{t}(x) \geq c$, where $c \in \R$, are bounded at each time instant.  Furthermore, based on Assumption \ref{assum:coercive_alphabar} and \cite[Theorem 1.4.4, p. 27]{peressini1988mathematics}, we can infer that there exists a time-dependent constant $\bar{c}(t) \in \R$ such that the super-level sets $\alphb_{t}(x_1) > \bar{c}(t)$ are empty. This implies that $\Ob(t)$ in \eqref{eq:omega_alpha_bar} is bounded, and in particular, it is empty if $\bar{c}(t) < 0$ at time $t$. Moreover, from $\alphb(t,x) \leq \alpha(t,x) + \frac{1}{\nu} \ln(m+p)$ we can verify that $-\alphb(t,x_1)$ is coercive if and only if $-\alpha(t,x_1)$ is coercive. As a result, we can apply similar arguments as mentioned above to establish the boundedness of $\Omega(t)$.

\section{Proof of Lemma \ref{lem:alphb_coercive_h}}
\label{appen:proof_lemma_alphb_coercive_h}

Using \eqref{eq:metric}, we can determine whether $-\alphb(t,x)$ is coercive is coercive if and only if \textit{at least} one of the functions $\psi_i(t,x)$ in \eqref{eq:predicate_const_rep} approaches $-\infty$ as $\|x\| \rightarrow +\infty$ (along any path on $\R^n$). It's worth noting that $-\alpha(t,x)$ in \eqref{smooth_alph} is also coercive under the same condition. Since the functions $\psi_i(t,x), i \in \I_{\psi}$ in \eqref{eq:predicate_const_rep} are bounded for all $t\geq 0$ and any fixed $x$ (see proof of Lemma \ref{lem:Omegab_bounded}), we can interpret this requirement in terms of $h_i(t,x)$. Specifically, if there exists an $i \in \{1,\ldots,p\}$ such that $h_i(t,x) \rightarrow \pm \infty$, then it ensures that $\psi_{2i-1}(t,x) \rightarrow -\infty$ or $\psi_{2i}(t,x) \rightarrow -\infty$ in \eqref{eq:predicate_funnel_rep} and vice versa. To simplify the verification process, we only need to check whether $\|h_f(t,x)\| \rightarrow +\infty$ holds along a path on $\R^n$ as $\|x\| \rightarrow +\infty$. From \eqref{eq:predicate_onesided_rep}, we can also see that if there exists a $j \in \{p+1,\ldots,p+q\}$ such that $h_j(t,x) \rightarrow - \infty$, then there exists an $i \in \{2p+1,\ldots,2p+q\}$ such that $\psi_{i}(t,x) \rightarrow - \infty$ and vice versa. Similarly, if there exists a $j \in \{p+q+1,\ldots,m\}$ such that $h_j(t,x) \rightarrow + \infty$, then there exists an $i \in \{2p+q+1,\ldots,p+m\}$ such that $\psi_{i}(t,x) \rightarrow - \infty$ and vice versa. In sum, if along any path on $\R^n$ as $\|x_1\| \rightarrow +\infty$ at least one of the above conditions holds (conditions I-III in the lemma) then $-\alphb(t,x)$ (resp. $-\alpha(t,x)$) is coercive and vice versa.

\section{Proof of Lemma \ref{lem:global_max_suffi}}
\label{appen:proof_lemma_globalmax}

\textbf{Case I:} Consider $\alpha(t,x)$ in \eqref{smooth_alph}. First, note that since $\psi_{i}(t,x), i \in \I_{\psi}$ are concave functions in $x \in \R^n, \forall t \geq 0$ then as $\nu > 0$, $-\nu \, \psi_{i}(t,x), i \in \I_{\psi}$ are convex $\forall t\geq 0$. Therefore, from \cite[Section 3.5]{boyd2004convex} it is known that $e^{-\nu \, \psi_{i}(t,x)}, i \in \I_{\psi}$ are log-convex functions. Hence, $\sum_{i=1}^{m+p} e^{- \nu  \, \psi_i(t,x)}$ is log-convex. Consequently, $\alpha(t,x)$ in \eqref{smooth_alph} is a concave function $\forall t \geq 0$, and since $\alpha(t,x)$ has bounded level sets (from Assumption \ref{assum:coercive_alphabar}) it attains a well-defined global maximum (i.e., the global maximum exists). Therefore, one can conclude that every critical point of $\alpha(t,x)$ is a global maximizer.

\textbf{Case II:} Here we first establish that under the given conditions $\alpha(t,x)$ attains only one critical point and then we show that the critical point is the (unique) global maximizer of $\alpha(t,x)$. Recall that the critical points of $\alpha(t,x)$ are obtained by solving $\nabla_x \alpha(t,x) = 0$. Given the assumed ordering of constraint types in \eqref{eq:predicate_const_rep} one can write $\alpha(t,x)$ in \eqref{smooth_alph} as follows:
\begin{align}\label{smooth_alph_alt}
	&\alpha(t,x) = -\dfrac{1}{\nu} \ln \Big( \sum_{i=1}^{p}  e^{- \nu  \, ( h_i(x) - \rl_i(t) )} + e^{- \nu  \, (\ru_i(t) - h_i(x))} \nonumber \\
	& + \!\! \sum_{i=p+1}^{p+q} \! e^{- \nu  \, ( h_i(x) - \rl_i(t) )} + \!\!\! \sum_{i=p+q+1}^{m} \!\!\! e^{- \nu  \, (\ru_i(t) - h_i(x))} \Big).
\end{align}
Using \eqref{smooth_alph_alt} and \eqref{smooth_alph}, and after some calculations, we can obtain $\nabla_x \alpha(t,x)$ in a compact form as:
\begin{equation} \label{eq:grad_alph}
	\nabla_x \alpha(t,x) = \left( \dfrac{\partial \alpha(t,x)}{\partial x} \right)^{\top} = J^{\top}(x) \, \gamma(t,x) \, e^{\nu \alpha(t,x)},
\end{equation}
where $J = \frac{\partial h(x)}{\partial x} \in \R^{m \times n}$ is the Jacobian of $y = h(x)$, and $\gamma(t,x) \coloneqq  \col(\gamma_i(t,x)) \in \R^{m}$, in which $\gamma_i(t,x), i \in \I$ are given by: 
\begin{subnumcases}{\label{eq:gamma_vec}} 
	e^{- \nu  \, ( h_i(x) - \rl_i(t))} - e^{- \nu  \, \left(\ru_i(t) - h_i(x)\right)}, & \hspace{-0.3cm} $i = \{1,\ldots,p\}$ \label{eq:gamma_i_funnel} \\
	e^{- \nu  \, ( h_i(x) - \rl_i(t))},  & \hspace{-2.3cm} $i = \{p+1,\ldots,p+q\}$ \label{eq:gamma_i_lower} \\
	- e^{- \nu  \, \left(\ru_i(t) - h_i(x)\right)}, & \hspace{-2.3cm} $i = \{p+q+1,\ldots,m\}$ \label{eq:gamma_i_upper} 
\end{subnumcases}

Notice that in \eqref{eq:grad_alph}, $e^{\nu \alpha(t,x)} > 0$, therefore, $\nabla_x \alpha(t,x) = 0$ if and only if $J^{\top}(x) \, \gamma(t,x) = 0$. If $m = n$ and each output constraint is a funnel constraint (i.e., $p = m = n$) then all $\gamma_i(t,x)$ will be given by \eqref{eq:gamma_i_funnel}. In this case, $J(x) \in \R^{n \times n}$ in \eqref{eq:grad_alph} is a square matrix and if $\rank(J) = n$, $\forall x \in \R^n$, then $J^{\top}(x) \, \gamma(t,x) = 0$ holds if and only if $\gamma(t,x) = 0$. Therefore, under the above conditions, $\nabla_x \alpha(t,x) = 0$ holds if and only if $\gamma_i(t,x)  = 0, \forall i \in \I$, and owing to \eqref{eq:gamma_i_funnel} it leads to having $h_i(x) = 0.5 (\ru_i(t) + \rl_i(t)), \forall i \in \I$, hence, we get the following system of nonlinear equations:
\begin{equation} \label{eq:nonlin_equation}
	F(t,x) \coloneqq h(x) - 0.5\left( \ru(t) + \rl(t) \right) = 0,
\end{equation}
where $\ru(t) \coloneqq \col(\ru_i(t)) \in \R^{n}$, $\rl(t) \coloneqq \col(\rl_i(t)) \in \R^{n}$, and $F(x,t)$ is $\C^1$ in $t$ and $\C^2$ in $x$ owing to the properties of $h(x)$, $\ru(t)$, and $\rl(t)$. For each $t$ define $F_t(x) \coloneqq F(t,x)$ and note that $0.5\left( \ru(t) + \rl(t) \right)$ in \eqref{eq:nonlin_equation} is bounded for all times. We are interested in checking the existence and uniqueness of the solution to \eqref{eq:nonlin_equation} at each time instant $t$, which boils down to checking the existence and uniqueness of the solution to $F_t(x) = 0$ for each $t$. Since $h(x)$ is norm-coercive (i.e., $\|h(x)\| \rightarrow +\infty \quad \text{as} \quad \|x\| \rightarrow +\infty$) then $F_t(x)$ is norm-coercive as well. Moreover, from \eqref{eq:nonlin_equation} $F_t(x)$ has the same Jacobian matrix as $h(x)$, which is invertible by assumption. Consequently, all conditions of Hadamard-Palais global inverse function theorem \cite[Collorary]{wu1972global} are met, and thus $F_t(x)$ is a diffeomorphism (i.e., it is an invertible map whose inverse is continuously differentiable) at each $t$. Therefore, $F_t(x) = 0$ or equivalently ${F}(t,x) = 0$ has a (single) unique solution $x^\ast(t)$ for each $t$, thus $\alpha(t,x)$ has a unique critical point $x^\ast(t)$ at each $t$. Note that, due to the continuity of ${F}(t,x)$, $x^\ast(t)$ depends continuously on time.  

Next, we will show that the unique critical point of $\alpha(t,x)$ at $t$, i.e., $x^\ast(t)$, is indeed the global maximum point of $\alpha(t,x)$ at $t$. In this regard, we consider the second derivative test on the critical point's trajectory, i.e., $x^\ast(t)$, of $\alpha(t,x)$. From \eqref{eq:grad_alph} and followed by matrix differentiation rules \cite{van2010consistent} we can obtain the Hessian matrix of $\alpha(t,x)$, i.e., $\Hes(t,x) \coloneqq \frac{\partial}{\partial x} \left( \nabla_x \alpha(t,x) \right)$ as ($\otimes$ is the Kronecker product):
\begin{align}\label{eq:second_deriv_alph}
	&\Hes(t,x) =   \dfrac{\partial}{\partial x} \left(  J^\top(x) \right) \left( \left[ \gamma(t,x) \, e^{\nu \alpha(t,x)} \right] \otimes I_n  \right)   \\
	& + J^\top(x) \;  \dfrac{\partial \, \gamma(t,x) }{\partial x} \,  e^{\nu \alpha(t,x)} + J^\top(x) \;   \gamma(t,x) \; \dfrac{\partial \, e^{\nu \alpha(t,x)} }{\partial x}. \nonumber 
\end{align}
Recall that on the critical point's trajectory we have $\gamma(t,x^\ast(t))  = 0$, hence, evaluating \eqref{eq:second_deriv_alph} on $x^\ast(t)$ gives: 
\begin{equation*} \label{eq:hess_at_critical_point}
	\Hes(t,x^\ast(t)) = J^\top(x^\ast(t)) \;  \left. \dfrac{\partial}{\partial x} \left(\gamma(t,x)\right) \right|_{x=x^\ast(t)}   e^{\nu \alpha(t,x^\ast(t))}.
\end{equation*}
From \eqref{eq:gamma_i_funnel}, one can get: 
\begin{equation*}\label{gamma_x_at_critical_point}
	\left. \dfrac{\partial}{\partial x} \left(\gamma(t,x)\right) \right|_{x=x^\ast(t)} = \Gamma(t,x^\ast(t)) \;  J (x^\ast(t)),
\end{equation*}
where $\Gamma(t,x^\ast(t)) \in \R^{n \times n}$ is a negative definite diagonal matrix whose diagonal entries are given by: 
\begin{equation*}
	-\nu \left[ e^{-\nu (h_i(x^\ast(t)) - \rl_i(t))} +  e^{-\nu ( \ru_i(t) - h_i(x^\ast(t)) )} \right].
\end{equation*} 
Therefore:
\begin{equation*} \label{eq:hessian_final}
	\Hes(t,x^\ast(t)) = J^\top(x^\ast(t))  \;  \Gamma(t,x^\ast(t)) \; J(x^\ast(t))   \;   e^{\nu \alpha(t,x^\ast(t))}.
\end{equation*}
Notice that $e^{\nu \alpha(t,x^\ast(t))}>0$ for all times. Since $J(x^\ast(t))$ is a full rank square matrix and $\Gamma(t,x^\ast(t))$ is a negative definite matrix, one can infer that the Hessian matrix $\Hes(t,x^\ast(t))$ is negative definite \cite{horn2012matrix}. Therefore, the critical point $x^\ast(t)$ is a local maximizer. Since $x^\ast(t)$ is the unique critical point of $\alpha(t,x)$ at time $t$, we conclude that $x^\ast(t)$ is indeed the (unique) global maximizer of $\alpha(t,x)$.

\section{Proof of Theorem \ref{th:main}}
\label{appen:proof_theorem}

The proof is comprised of two phases. First, we show that there exists a unique and maximal solution $x: [0, \taum) \rightarrow \R^n$ such that $\alphah(t,x(t;x_0))$ remains within $\Omega_{\alphah} \coloneqq (-1, 1)$. Next, by contradiction we show that $\taum=\infty$ holds and eventually conclude that for all times $\alphah(t,x(t;x_0))$ remains strictly in a compact subset of $(-1, 1)$ (i.e., forward completeness) and all the closed-loop signals remain bounded. In the sequel, for brevity in the notations, we drop the dependence on time and/or states whenever it does not cause any ambiguity. 

Differentiating $\alphah(t,x)$ in \eqref{eq:normal_alpha} yields:
\begin{equation} \label{eq:dot_alphh}
	\dot{\alphah} = \frac{2}{\rhod} \left[ \frac{\partial \alpha}{\partial x} \dot{x} + \frac{\partial \alpha}{\partial t} -\frac{1}{2} \left( \drhos + \drhod \, \alphah \right) \right].
\end{equation}
Owing to \eqref{eq:mapped_alphah}, \eqref{eq:xi}, and \eqref{eq:dot_alphh}, we also see:
\begin{equation}\label{eq:epi_dot}
	\depialph = \dfrac{\partial \T}{\partial \alphah} \dot{\alphah} = \xialph \left[ \frac{\partial \alpha}{\partial x} \dot{x} + \frac{\partial \alpha}{\partial t} -\frac{1}{2} \left( \drhos + \drhod \, \alphah \right) \right].
\end{equation}

\textbf{\textit{Phase I.}} First, define the stacked vector $z \coloneqq [x^\top, \alphah]^\top \in \R^{n+1}$ and let $\dot{z} = B(t,z)$ with $B(t,z) \coloneqq [B_x^{\top}(t,x,\alphah), B_{\alphah}(t,x,\alphah)]^\top \in \R^{n+1}$, where $\dot{x} = B_x(t,x,\alphah)$ and $\dot{\alphah} = B_{\alphah}(t,x,\alphah)$. From \eqref{eq:sys_dynamics_firstorder}, \eqref{eq:PPC_control_explicit} and \eqref{eq:mapped_alphah} one can get: 
\begin{equation*}\label{eq:B_x}
	B_x(t,x,\alphah) \coloneqq f(x) - k \, \xialph(t,\alphah) \, \epialph(\alphah) g(x) \frac{\partial \alpha(t,x)}{\partial x}^\top \!\!\!\! + w(t),
\end{equation*}
which along with \eqref{eq:dot_alphh} yields: 
\begin{align*}\label{eq:B_alphah}
	B_{\alphah}(t,x,\alphah) \coloneqq \frac{2}{\rhod(t)} &\left[\frac{\partial \alpha(t,x)}{\partial x} B_x(t,x,\alphah) + \frac{\partial \alpha(t,x)}{\partial t} \right. \\
	&\left. -\frac{1}{2} \left( \drhos(t) + \drhod(t) \, \alphah \right) \right].
\end{align*}

By assumption $x_0$ is such that $\rlalph(0) < \alpha(0,x_0) \leq \alphaopt(0) < \rualph(0)$, thus from \eqref{eq:normal_alpha} $\alphah_0 \coloneqq \alphah(0,x_0) \in \Omega_{\alphah} = (-1,1)$, where  $\Omega_{\alphah}$ is nonempty and open. Now define $\Ox(t) \coloneqq \{ x \in \R^n \mid -1 < \alphah(t,x) < 1\}$, which due to \eqref{eq:normal_alpha} can be re-written as:
\begin{equation}\label{eq:Omega_x}
	\Ox(t) = \{ x \in \R^n \mid \rlalph(t) < \alpha(t,x) < \rualph(t)\}.
\end{equation}
Note that $\Ox(t)$ is nonempty for each $t\geq 0$ since $\alpha(t,x)$ is continuous and $\rlalph(t) <  \alphaopt(t) <\rualph(t)$ holds for all times by construction. Define $\alpha_{t}(x) \coloneqq \alpha(t,x)$. Since a coercive function is also norm-coercive, From Assumption \ref{assum:coercive_alphabar} and \cite[Proposition 1.2]{de1994global} we know $\alpha_{t}(x)$ is a proper  continuous map in $x$ for all $t \geq 0$ \cite[Definition 8.5]{terrell2009stability}. Define $\Omega_{\alpha}(t) \coloneqq \{\rlalph(t) \leq \alpha(t,x) \leq \rualph(t)\}$, which is a compact set for all $t \geq 0$. Since $\alpha_{t}(x)$ is proper the preimage $\alpha_t^{-1}(\Omega_{\alpha}(t))$ = $\cl(\Ox(t)) = \{ x \in \R^n \mid \rlalph(t) \leq \alpha(t,x) \leq \rualph(t)\}$ is compact, thus $\Ox(t)$ is bounded and open for all $t\geq 0$. Now define $\Oxs \coloneqq \bigcup_{t=0}^{+\infty} \Ox(t) \subset \R^n$, which is the (time-invariant) super set containing $\Ox(t), \forall t \geq 0$. Owing to the properties of $\Ox(t)$ established above, $\Oxs$ is nonempty, open, and bounded. Consequently, we can define the open, bounded, and nonempty set $\Oz \coloneqq \Oxs \times \Oalphh$, which does not depend on $t$. It holds that $z_0 \coloneqq [x_0^\top, \alphah_0]^\top \in \Oz$. 

Finally, one can verify that $B(t,z) \equiv B(t,x,\alphah)$ is locally Lipschitz on $z = [x^\top,\alphah]^\top$ over the set $\Oz$ and is (piece-wise) continuous on $t$. Therefore, the hypotheses of Theorem 54 in \cite[p.~476]{sontag1998mathematical} hold and the existence and uniqueness of a maximal solution $z(t;z_0) \in \Oz$ for a time interval $t \in [0, \taum)$ is guaranteed. Accordingly, for all $t \in [0, \taum)$ we get $x(t;x_0) \in \Oxs$ and $\alphah(t,x(t;x_0)) \in \Oalphh$, which indicate boundedness of $x(t;x_0)$ and $\alphah(t,x(t;x_0))$ for all $t \in [0, \taum)$. 

\textbf{\textit{Phase II)}} Notice that owing to $\alphah(t,x(t;x_0)) \in \Oalphh, \forall t \in [0, \taum)$, \eqref{eq:normal_alpha} implies that $\rlalph(t) < \alpha(t,x(t;x_0)) < \rualph(t), \forall t \in [0, \taum)$. On the other hand, since $\alphaopt(t)$ represents the maximum  value of $\alpha(t,x)$ for all times it is ensured that $\alpha(t,x(t;x_0)) \leq \alphaopt(t), \forall t \in [0, \taum)$. Therefore: 
\begin{equation} \label{eq:alpha_bounds_taumax}
\rlalph(t) < \alpha(t,x(t;x_0)) \leq  \alphaopt(t)  < \rualph(t), 
\end{equation} 
for all $t \in [0, \taum)$. In what follows we establish that $\taum = + \infty$ by contradiction. Let $\taum$ be finite, hence, $\alphah(t,x(t;x_0))$ should approach $+1$ or $-1$ as $t \rightarrow \taum$, which, from \eqref{eq:normal_alpha}, yields $\alpha(t,x(t;x_0)) \rightarrow \rualph(t)$ or $\rlalph(t)$. Nevertheless, from \eqref{eq:alpha_bounds_taumax} the only possible case is $\alpha(t,x(t;x_0)) \rightarrow \rlalph(t)$ as $t \rightarrow \taum$. Recall that owing to $\alphah(t,x(t;x_0)) \in \Oalphh, \forall t \in [0, \taum)$, $\epialph$ in \eqref{eq:mapped_alphah} is well-defined for all $t \in [0, \taum)$. Hence, the barrier function $V(\epialph) = 0.5 \epialph^2$, introduced in Section \ref{subsec:control_design}, can be considered as a positive definite and radially unbounded Lyapunov function candidate w.r.t. $\epialph$. Note that, $\alpha(t,x(t;x_0)) \rightarrow \rlalph(t)$ as $t \rightarrow \taum$ implies $V(\epialph) \rightarrow +\infty$ as $t \rightarrow \taum$. Differentiating $V$ w.r.t. time and substituting \eqref{eq:sys_dynamics_firstorder}, \eqref{eq:PPC_control_explicit}, and \eqref{eq:epi_dot} yields:
\begin{align} \label{eq:lyap1}
	\dot{V} &= \epialph \, \xialph \, \rondalph \, g(x) \, u + \epialph \, \xialph \left[\rondalph \left(f(x) + w(t) \right)  + \frac{\partial \alpha}{\partial t} \right. \nonumber \\ 
	& \quad \left. -\frac{1}{2} \left( \drhos + \drhod \, \alphah \right) \right] = -k \, \rondalph \, g(x) \, \rondalph^\top \epialph^2 \, \xialph^2 + \epialph \, \xialph \, \Phi \nonumber \\
	&\leq -k \, \lambda \, \|\gradxalph\|^2 \, \epialph^2 \, \xialph^2 + |\epialph| \, |\xialph| \, |\Phi|,
\end{align}
where $\Phi \coloneqq \rondalph (f(x) + w(t))  + \tfrac{\partial \alpha}{\partial t} -\tfrac{1}{2} ( \drhos + \drhod \, \alphah )$ and $\lambda \coloneqq \lambda_{min}(g_s(x)), \forall x \in \Oxs$, that is the minimum eigenvalue of $g_s(x)$ for all $x \in \Oxs$. Note that, $\lambda > 0$ due to Assumption \ref{assum:symm_g}. From \textit{Phase I}, we know that $\alphah(t,x(t))$ is bounded for all $t \in [0, \taum)$. Moreover, $\drhos(t)$ and $\drhod(t)$ are bounded by construction and $\|w(t)\|$ is bounded by assumption. Due to continuity of $f(x)$ and $x(t) \in \cl(\Oxs)$ for all $t \in [0, \taum)$, where $\cl(\Oxs)$ is a compact set, $\|f(x)\|$ is bounded for all $t \in [0, \taum)$. From \eqref{eq:grad_alph} and \eqref{smooth_alph}, one can verify that $\tfrac{\partial \alpha(t,x)}{\partial x}$ is continuous on $x$ and $t$ and since $\ru_i(t), \rl_i(t)$ in \eqref{eq:predicate_const_rep} are bounded, $\tfrac{\partial \alpha(t,x)}{\partial x}$ is bounded in $t$. In addition, the boundedness of $\tfrac{\partial \alpha(t,x)}{\partial x}$ for all $t \in [0, \taum)$ is guaranteed since from \textit{ Phase I} $x(t)$ is confined in the compact set $\cl(\Oxs)$ for all $t \in [0, \taum)$. Similarly, using \eqref{smooth_alph} (or \eqref{smooth_alph_alt}) and invoking boundedness of $\drl_i(t), \dru_i(t)$ one can also show that $\tfrac{\partial \alpha(t,x)}{\partial t}$ is a bounded function for all $t \in [0, \taum)$. Overall, the above arguments ensure that $|\Phi|$ is bounded for all $t \in [0, \taum)$. 

Note that under Assumption \ref{assu:alpha_globalmax}, $\|\gradxalph(t,x)\| = 0$ if and only if $\alpha(t,x) = \alphaopt(t)$. Moreover, recall that, from Subsection \ref{subsec:single_funnel_const}, $\alphaopt(t) - \rlalph(t) \geq \underline{\varsigma} > 0$ holds $\forall t \geq 0$ by construction. Consequently, by continuity of $\gradxalph(t,x)$, there exists $\epsilon_{\alpha} > 0$ such that $\|\gradxalph(t,x(t;x_0))\| \geq \epsilon_{\alpha} > 0$ when $\alpha(t,x(t;x_0)) \rightarrow \rlalph(t)$ as $t \rightarrow \taum$. Additionally, owing to $\alphah(t,x(t;x_0)) \in \Oalphh, \forall t \in [0, \taum)$ and the boundedness of $\rhod(t) \geq \delta > 0, \forall t\geq 0$ (which holds by construction, see Subsection \ref{subsec:single_funnel_const}), from \eqref{eq:xi} one can infer that $\xialph \geq  \frac{4}{\max(\rhod(t))} > 0$, $\forall t \in [0, \taum)$. Now let $0 < \theta < k \lambda$ and $\sigma \coloneqq  k \lambda - \theta > 0$ be positive constants. Adding and subtracting $\theta  \|\gradxalph\|^2 |\xialph|^2 |\epialph|^2$ in the right-hand side of \eqref{eq:lyap1} yields:
\begin{align} \label{eq:lyap2}
	\dot{V} &\leq -\sigma \, \|\gradxalph\|^2 \, \epialph^2 \, \xialph^2 - |\epialph| \, |\xialph| \left( \theta \|\gradxalph\|^2  |\epialph|  |\xialph| - |\Phi| \right) \nonumber \\
	& = -\sigma \, \|\gradxalph\|^2 \, \epialph^2 \, \xialph^2, \quad \forall |\epialph| \geq \frac{|\Phi|}{\theta {\|\gradxalph\|^2}  |\xialph|}. 
\end{align}
which, indicates that $\epialph$ is Uniformly Ultimately Bounded (UBB) \cite[Theorem 4.18]{khalil2002noninear} when $\alpha(t,x(t;x_0)) \rightarrow \rlalph(t)$ as $t \rightarrow \taum$. Consequently, there exists an ultimate bound $\epialphb > 0$ independent of $\taum$ such that $|\epialph| \leq \epialphb$ as $t \rightarrow \taum$. Hence, $V(\epialph)$ remains bounded and $V(\epialph) \nrightarrow +\infty$ as $t \rightarrow \taum$, which is a contradiction, therefore, $\taum = +\infty$.

Notice that, since $V(\epialph)$ does not tend to infinity when $\alpha(t,x(t;x_0)) \rightarrow \rlalph(t)$ then $\alpha(t,x(t;x_0))$ is kept strictly away from $\rlalph(t)$ for all times. Consequently, since $\rualph(t) - \alphaopt(t) \geq \bar{\varsigma} > 0$ holds by construction, from \eqref{eq:alpha_bounds_taumax}, one can infer that $\alpha(t,x(t;x_0))$ is kept strictly away from $\rlalph(t)$ and $\rualph(t), \forall t \geq 0$, and thus the satisfaction of the $\alpha$-funnel constraint \eqref{eq:alpha_funnelconst} is ensured $\forall t \geq 0$. Recall that, from \eqref{eq:Omega_x} the satisfaction of \eqref{eq:alpha_funnelconst} implies $x(t;x_0) \in \Ox(t)$, which ensures $x(t;x_0)$ remains bounded as $\Ox(t)$ is a bounded set $\forall t \geq 0$. Finally, since $\alpha(t,x(t;x_0))$ is kept strictly away from $\rlalph(t)$ and $\rualph(t), \forall t \geq 0$, there exists constants $-1 < \underline{b} < \bar{b} < 1$ such that $\alphah(t,x(t;x_0)) \in [\underline{b} , \bar{b}] \subset \Oalphh = (-1,1)$, $\forall t \geq 0$. Therefore, from \eqref{eq:mapped_alphah} and \eqref{eq:xi}, $\epialph$ and $\xialph$ are bounded, and thus $u$ in \eqref{eq:PPC_control_explicit} is also bounded $\forall t \geq 0$. 

\end{document}